\documentclass[11pt, a4paper]{article}

\usepackage{amsmath, amsthm, amssymb, amsfonts}
\usepackage{lmodern} 
\usepackage[utf8]{inputenc}
\usepackage{mathtools}
\usepackage{hyperref}
\usepackage{graphicx}
\usepackage{dsfont}
\usepackage{fullpage}
\usepackage{color}
\usepackage{soul} 
\usepackage[title]{appendix}
\usepackage{tikz}
\usetikzlibrary{patterns}
\usetikzlibrary{shapes.multipart}
\usetikzlibrary{arrows}
\usepackage{braket}
\usepackage{stmaryrd}

\definecolor{labelkey}{cmyk}{.4,.2,0,0}


\newcommand{\be}{\begin{equation}}
\newcommand{\ee}{\end{equation}}
\newcommand{\bea}{\begin{eqnarray}}
\newcommand{\eea}{\end{eqnarray}}
\newcommand{\nn}{\nonumber}
\newcommand{\kk }{\mathsf{k}}

\newcommand{\EE}{\ensuremath{\mathbb{E}}}

\newcommand{\R}{\ensuremath{\mathbb{R}}}
\newcommand{\C}{\ensuremath{\mathbb{C}}}
\newcommand{\Z}{\ensuremath{\mathbb{Z}}}

\newcommand{\I}{\ensuremath{\mathbf{i}}}
\newcommand{\Real}{\ensuremath{\mathfrak{Re}}}

\newcommand{\Res}[1]{\underset{{#1}}{\mathbf{Res}}}

\renewcommand{\rho}{\varrho}
\newcommand{\e}{\epsilon}
\newcommand{\eps}{\varepsilon}
\renewcommand{\leq}{\leqslant}
\renewcommand{\geq}{\geqslant}

\newcommand{\strip}{\mathbb{S}}

\usepackage[backend=biber,style=numeric, maxnames=5, maxalphanames=5, sortcites=true, giveninits=true, isbn=false, clearlang=true, doi=false, url=false, sorting=none]{biblatex}
\renewbibmacro{in:}{}
\AtEveryBibitem{\clearlist{language}} 
\addbibresource{biblioCumulants.bib}
\DeclareFieldFormat[article]{title}{\mkbibemph{#1}}
 \DeclareFieldFormat[article]{journaltitle}{#1}

\numberwithin{equation}{section}

\usepackage{titlesec}
\titleformat{\section}{\large\bf}{\thesection}{1em}{}
\titleformat{\subsection}[runin]{\bf}{\thesubsection}{1em}{}[.]
\titleformat{\subsubsection}[runin]{\it}{\thesubsubsection}{1em}{}[.]


\newtheorem{theorem}{Theorem}[section]

\newtheorem{lemma}[theorem]{Lemma}

\theoremstyle{definition}

\theoremstyle{definition}

\theoremstyle{definition}

\theoremstyle{definition}
\newtheorem{remark}{Remark}[section]


\usepackage{authblk}
\author[1]{Guillaume Barraquand and Pierre Le Doussal}
\affil[1]{\normalsize Laboratoire de Physique de l'\'Ecole Normale Sup\'erieure, ENS, Universit\'e PSL, CNRS, Sorbonne Universit\'e, Universit\'e Paris-Cité, 75005 Paris, France}

\title{\bf \large 
Large time cumulants of the KPZ equation on an interval} 
\date{}
\begin{document}

\maketitle

\begin{abstract} 
We consider the Kardar-Parisi-Zhang equation on the interval $[0,L]$ with Neumann type boundary conditions
and boundary parameters $u,v$. 
We show that the $k$-th order cumulant of the height behaves as $c_k(L,u,v)\,  t$ in the large time limit $t \to +\infty$,
and we compute the coefficients $c_k(L,u,v)$. We obtain an expression for 
the upper tail large deviation function of the height. We also consider the
limit of large $L$, with $u=\tilde u/\sqrt{L}$, $u=\tilde v/\sqrt{L}$,
which should give the same quantities for the two parameter family $(\tilde u, \tilde v)$ KPZ fixed point on the interval. 
We employ two complementary methods. On the one hand we adapt to the interval the replica Bethe ansatz method pioneered by Brunet and Derrida
for the periodic case. On the other hand, we perform a scaling limit using previous results available for the open ASEP.
The latter method allows to  express the cumulants of the KPZ equation in terms
a functional equation involving an integral operator. 
\end{abstract}

\section{Introduction and main results} 
\subsection{Overview}

The probability distribution of the solution for the height field $h(x,t)$ of the Kardar-Parisi-Zhang equation \cite{kpz} on $\R$ is known exactly, at least for the narrow wedge initial condition \cite{calabrese2010free, dotsenko2010replica, sasamoto2010exact, amir2011probability}. On $\mathbb R_+$ with Neumann type boundary condition, the probability distribution is known as well, at least near the boundary \cite{gueudre2012directed, barraquand2018stochastic, krajenbrink2020replica, imamura2022solvable}. In this article, we consider the case of the KPZ equation in a finite volume in one dimension, which is richer, and more challenging: the probability distribution of the solution is not known exactly. There are two types of boundary conditions that one may impose for the KPZ equation on $[0,L]$: the periodic case (i.e. the KPZ equation on a ring), and the case of Neumann type boundary conditions (i.e. the KPZ equation on an interval, also called the open KPZ equation).
\medskip 

The case where $t$ and $L^{3/2}$ are of the same order and large is particularly interesting, and related to the KPZ fixed point \cite{matetski2021kpz}. For periodic boundary conditions, the periodic KPZ fixed point was studied in \cite{prolhac2016finite, baik2018fluctuations, baik2019multi,baik2021periodic,prolhac2024kpz} through a limit of the periodic TASEP. For the KPZ equation on an interval, the exact distribution of the open KPZ fixed point is unknown. 
\medskip

In the regime where $L \gg t^{2/3}$, the problem reduces to the case of the KPZ equation on $\R$ or $\R_+$, depending on whether one is interested in the distribution of $h(t,x)$ for $x$ in the bulk or at the edge of the interval. In this article, we consider the opposite limit, i.e. when $t \gg L^{3/2}$, for fixed $L$. 
\medskip

For fixed $L$ and $t$ going to infinity, the height $h(x,t)$ grows as $t$ goes to infinity as $c_1 t$, and $\frac{h(x,t)-c_1 t}{\sqrt t}$ converges to a Gaussian random variable with variance $c_2$. 
In the periodic case, the constants $c_1$ and $c_2$,  as well as the higher cumulants in the large time limit, were computed by Brunet and Derrida in \cite{brunet2000probability,brunet2000ground} via the replica method. This allowed them to compute the upper tail large deviation function of the height field $h(x,t)$ for large $L$ (i.e. for $t\to\infty$ and then $L\to\infty$).  In this paper, we compute these cumulants in the case of the KPZ equation in an interval,
with general boundary parameters. The cumulant generating function is obtained in a parametric form, depending on a meromorphic function which solves a functional equation involving an integral operator, denoted $\kk$ below -- see Equation  \eqref{eq:functionalequation}. 
\medskip

We employ two completely different methods. On the one hand, we show that the original method of Brunet and Derrida can be adapted to the case of the interval. We will proceed slightly differently from \cite{brunet2000probability,brunet2000ground} and we obtain formulas involving another integral operator, denoted $K$ below. 
On the other hand we will obtain the cumulants for the KPZ equation on an interval through
a limit of the open asymmetric simple exclusion process (ASEP). Indeed, for the totally asymmetric simple exclusion process (TASEP), Prolhac  \cite{prolhac2010tree} showed that the cumulant generating function can be defined as a parametric equation involving the solution to some functional equation. The same functional equation was recovered using a deformed version of the matrix product ansatz in \cite{lazarescu2011exact, LazarescuThesis} and this method was adapted to the case of the interval with open boundary conditions and to the ASEP \cite{gorissen2012exact, lazarescu2013matrix} -- see also \cite{lazarescu2014bethe} for another derivation, and  \cite{mallick2015exclusion} for a review. 
We show that if we scale all parameters so that the open ASEP converges to the KPZ equation on an interval, the functional equation admits a particularly nice limit which allows for the iterative computation of cumulants. However, there is a subtle issue of exchange of limits here: 
the convergence of the ASEP to the KPZ equation is about the typical behaviour of the height function. A priori, the convergence in distribution of ASEP to KPZ does not ensure that the large deviations of both models are related in the limit. Nevertheless, we check for the first two cumulants that the large time cumulants of the KPZ equation on an interval are limits of those of the ASEP, by matching the results with what we obtain using the method of Brunet and Derrida. The formulas obtained by both methods are different, but equivalent.  In particular, the integral operators ($\kk$ in Section \ref{sec:mainresults}, Section  \ref{sec:limitLazarescu} and \ref{sec:cumulants} and $K$ in Section \ref{sec:BrunetDerrida}) are slightly different. 
\medskip

In addition to these results at fixed $L$, we also consider the large $L$ limit of the cumulants, when boundary parameters are scaled as  $u=\tilde u/\sqrt{L}$, $u=\tilde v/\sqrt{L}$. We expect that this limit yields the large time cumulants of the two parameter $(\tilde u, \tilde v)$ KPZ fixed point on the interval. These cumulants can also be computed as a large scale large time cumulants of TASEP -- see \cite{lazarescu2011exact} when $\tilde u,\tilde v$ are $\pm \infty$ or $0$ and \cite{prolhac2024kpz, godreau2021riemann} for the general case.

\medskip 

Finally, our calculation provides an alternative route to solve again the periodic case, leading to results in a different form. For the first two cumulants it is easy to show equivalence  with the results of \cite{brunet2000probability,brunet2000ground}. Another representation of
these first two cumulants was obtained recently in terms of a functional of Brownian bridges 
\cite{gu2024kpz, dunlap2023fluctuation, brunet2021high, gu2025some}. It
would be quite interesting to extend these representations to the interval,
and to relate them to the explicit analytical expressions both for the periodic
case and for the interval (see Appendix \ref{sec:appendixfunctionals}).

\subsection{Model and observables}

We consider the KPZ equation for the evolution in time $t>0$ of the height field $h(x,t)$ on the interval $x \in [0,L]$ 
\be 
\partial_t h(x,t) = \frac{1}{2} \partial^2_{x} h(x,t) + \frac{1}{2} \left(\partial_x h(x,t) \right)^2 + \xi(x,t)
\label{eq:KPZ}
\ee 
where $\xi(x,t)$ is a standard space-time white noise, 
with boundary conditions parametrized by $(u,v) \in \mathbb{R}^2$
\be
\partial_x h(t,x) \Big\vert_{x=0} =u, \quad  \quad \partial_x h(t,x) \Big\vert_{x=L} =-v.
\label{eq:boundary}
\ee 
This stochastic PDE is often called the open KPZ equation, as it arises as a limit of open ASEP,
by contrast with the periodic KPZ equation which arises as a limit of the ASEP on a ring.
The solution can be written as $h(x,t)=\log Z(x,t)$, 
in terms of the partition sum $Z(x,t)$ of a directed polymer, which satisfies the stochastic heat equation (SHE)
\be 
\partial_t Z(x,t) = \frac{1}{2} \partial^2_{x} Z(x,t) + Z(x,t)\xi(x,t)
\label{eq:SHE}
\ee 
with boundary conditions 
\be  \label{eq:SHEboundaryconditions} 
\partial_x Z(x,t)\Big\vert_{x=0} = (u-\tfrac{1}{2})Z(0,t), \;\;\;\; \partial_x Z(x,t)\Big\vert_{x=L} = -(v-\tfrac{1}{2})Z(L,t).
\ee 
\subsubsection{Stationary measures} 
\label{sec:stationarymeasures}
Before describing the observables of interest here, it is useful to recall the stationary measure of the open KPZ equation, which was obtained recently   
\cite{corwin2024stationary, barraquand2022steady, bryc2023markov, barraquand2023stationary,
barraquand2024stationary, corwin2022some}. 
We say that a process $x\mapsto H(x)$ is stationary for the open KPZ equation if $h(0,x)-h(0,0)=H(x)$ implies that  for all $t>0$, $h(t,x)-h(t,0)$ has the same distribution as $H(x)$ (as a process in the spatial variable $x$). The distribution of $H(x)$ may also be viewed as  the large time limit of the distribution of $h(t,x)-h(t,0)$. 

\medskip 

The stationary height field is given by $H(x)=\Lambda_1(x)-\Lambda_1(0)$ where the processes $(\Lambda_1,\Lambda_2)$ are a couple of  Brownian motions interacting through an exponential potential.
They are distributed, for $u+v>0$, as \cite{bryc2023markov, barraquand2022steady}
\begin{equation}
\frac{1}{\mathcal Z_{u,v}(L)} e^{-u(\Lambda_1(0)-\Lambda_2(0)) -v(\Lambda_1(L)-\Lambda_2(L)) -\int_{0}^L 
\left[ \frac{1}{2}\left( \frac{d\Lambda_1(x)}{dx}\right)^2 + \frac{1}{2} \left( \frac{d\Lambda_2(x)}{dx}\right)^2   + e^{-(\Lambda_1(x)-\Lambda_2(x))}\right]dx }
\label{eq:KPZstat}
\end{equation}
where $\mathcal Z_{u,v}(L)$ is a normalization constant given below\footnote{Stritly speaking, \eqref{eq:KPZstat} defines an infinite measure, but if we fix $\Lambda_1$ or $\Lambda_2$ at one point, say $\Lambda_2(0)=0$, then \eqref{eq:KPZstat} becomes a finite measure, and $\mathcal Z_{u,v}(L)$ is precisely chosen so that it is a probability measure. By translation invariance, one can fix either $\Lambda_1$ or $\Lambda_2$ at any point $x_0$, this would not change the probability measure of increments for both processes.}. 
Alternatively, the stationary height field may be defined as $H(x)=W(x)+ X(x)$ where $W$ is a standard Brownien motion with diffusion coefficient $1/2$ and $W(0)=0$, and $X$ is distributed as \cite{barraquand2022steady}
\be 
\frac{\delta_0(X(0))}{\mathcal K_{u,v}(L)} \left( \int_0^L e^{-2X(x)}dx \right)^{-u-v}e^{-2vX(L)} e^{-\int_0^L \left(\frac{dX(x)}{dx} \right)^2 dx}.
\label{eq:stationarymeasure2}
 \ee
 with $X(0)=0$ and $\mathcal K_{u,v}(L) = \mathcal Z_{u,v}(L)/\Gamma(u+v)$. 
 This last expression holds for any $u,v\in \R$. 

\subsubsection{Observables of interest}
In this paper, we are interested in the large time asymptotics of the cumulants of the height
\be 
c_k(u,v,L)  := \lim_{t\to\infty} \frac{1}{t} \kappa_k(h(x,t)), \label{defck} 
\ee 
where, for a random variable $X$, $\kappa_k(X)$ denotes the $k$-th cumulant of $X$. The cumulants  $\kappa_k(h(x,t))$ depend on $x$ and the initial condition $h_0(x)=h(x,0)$. When $h_0$ is chosen according to the stationary measure on the interval, as described above, the first cumulant is exactly linear in $t$, i.e. $\kappa_1(h(0,t)) = t c_1(u,v,L)$. For an arbitrary initial condition $h_0$, since the  solution converges to the unique stationary state, the result of the limit $\lim_{t\to\infty} \frac{1}{t} \kappa_1(h(0,t))$ must be the same as starting from the stationary state.

More generally, since we work on an interval of finite size $L$, we expect that the value of $c_k(u,v,L)$ neither depend on $x$.
It is interesting to note that there are subleading corrections to \eqref{defck} at large time, which depend on both $x$ and $h_0$ \cite{prolhac2024approach},
that we do not address here. We can thus from now on choose $x=0$. 

\subsubsection{Cumulant generating functions and upper tail large deviations}
A related observable is the cumulant generating function 
\be 
E_t(s) = \frac{1}{t} \log \mathbb E\left[ e^{s h(0,t)}\right] = \frac{1}{t} \sum_{k=1}^{\infty} \frac{s^k}{k!} \kappa_k(h(0,t)).  
\ee 
As $t\to\infty$, $E_t(s)$ converges to a function $E(s)$, which depends on $u,v,L$. 
We expect that the convergence is such that one can recover the large time cumulants $c_k(u,v,L)$ from $E(s)$, that is we have 
\be \label{Emuexpansion}
E(s) = \sum_{k=1}^{\infty} \frac{s^k}{k!} c_k(u,v,L)
\ee

Finally, denoting by $P(h)$ the PDF of $h(0,t)$, we expect the upper tail large deviations at large time 
\be  \label{defPhi} 
P(h) \sim e^{- t \,  \Phi\left(\frac{h}{t} +\frac{1}{24} \right)}, 
\ee 
where $\Phi(y)$ is a rate function (here $\sim$ means that the logarithm of both sides are equivalent as $t$ goes to infinity). More precisely, as we expect the same tail for the PDF and the CDF, we expect that for all $y>c_1+\frac{1}{24}$, 
\be 
\lim_{t\to\infty} \frac{\log \mathbb P\left( h(0,t)+\frac{t}{24} >y t \right)}{t} = \Phi(y).
\label{eq:defPhimaths}
\ee 
Letting $H=\frac{h}{t}+\frac{1}{24}$, a saddle-point method  leads to 
\be  \label{legendre1} 
 E( s) =  \frac{-s}{24} +  \max_{H \in \R} \left\lbrace  s H - \Phi(H) \right\rbrace.
\ee 
This can be inverted to obtain the rate function from $E(s)$ as 
\be  \label{legendre2}
\Phi(H) =  \max_{s \in \R}  \left\lbrace  sH   -  \left(E(s)+\frac{s}{24}\right) \right\rbrace.
\ee 
In this paper, we will compute $E(s)$ in an interval around $s=0$, hence characterizing the cumulants $c_k(u,v,L)$.

\subsection{Main results} 
\label{sec:mainresults}

\subsubsection{Functional equation} \label{subsubsec:functional} 
Assume that $u,v>0$. 
The (asymptotic) cumulant generating function $E(s)$
is obtained parametrically through the system of equations 
\begin{align} \label{systemEmu1} 
    s &= \int_{\I\R} \frac{dw}{2\I\pi} U(w),\\
    E(s) &= - \frac{1}{24} s + \frac{1}{2} \int_{\I\R} \frac{dw}{2\I\pi} w^2U(w),  \label{systemEmu2}
\end{align}
where the  function $U(w)$ is the solution to the functional equation 
\be 
U(w) = \frac{-1}{2} \log\left( 1-2 \zeta \Psi(w) e^{\kk U(w)} \right),
\label{eq:functionalequation}
\ee 
and $\zeta$ is a free variable that must be eliminated between the two above equations \eqref{systemEmu1}-\eqref{systemEmu2}. Here, in the case of the interval, one has 
\be \label{defPsi}
\Psi(w) = \frac{\Gamma( u + w) \Gamma(u-w) \Gamma(v + w) \Gamma(v-w) e^{w^2L}}{\Gamma( 2w) \Gamma(-2 w)}
\ee 
and the linear operator $\kk$ is defined by 
\be \label{defoperator} 
\kk \phi(w) = \int_{\I\R} \frac{dw'}{2\I\pi} \kk(w,w')\phi(w')
\ee
where
\be \label{defkernel} 
\kk(w,w') = -2\left(\psi(1+w-w') + \psi(1+w'-w )\right)\,\, \text{ with } \psi(z) = \partial_z\log \Gamma(z).
\ee
Assuming that $U(w) = \sum_{k=1}^{\infty} \zeta^k U_k(w)$ and that both sides of the functional equation \eqref{eq:functionalequation} can be expanded in powers series of $\zeta$, one can show that the solution is unique. The digamma function $\psi(z)$ is a meromorphic function with simple poles at $z = -n$, $n \in \mathbb{N}$ and satisfies the following useful properties 
\be \label{properties}
\psi(z+1)=\psi(z)+ \frac{1}{z},  \quad  \quad \int_0^{+\infty} dt \frac{e^{- t z}-1}{e^t -1} = - \gamma_E - \psi(1+z).
\ee

\begin{remark}
The function $\Psi$ appears in \cite{corwin2024stationary, bryc2023markov, barraquand2022steady, barraquand2024integral}. For $u,v>0$, the normalization constant in \eqref{eq:KPZstat} is given by \cite{bryc2023markov, barraquand2022steady}
\be \label{norm} 
\mathcal Z_{u,v}(L) = \frac{1}{2} \int_{\I\R} \frac{dw}{2\I\pi} \Psi(w).
\ee 
\end{remark}
\begin{remark}
The formulas above are stated for $u,v>0$. As we will see, results when $u\leq 0$ or $v\leq 0$ can be obtained after performing analytic continuations in the parameters $u,v$. 
\end{remark}

\subsubsection{First and second cumulants}

From the above functional equation one obtains the cumulants. Let us give the explicit expressions
of the first few cumulants. We have  that, for $u,v>0$, the first cumulant reads \cite{barraquand2024integral}
\be \label{eq:firstorderintro}
c_1(u,v,L) =\frac{-1}{24}+ \frac{1}{2}\frac{\int_{\I\R} \frac{dw}{2\I\pi} w^2 \Psi(w)}{\int_{\I\R} \frac{dw}{2\I\pi}  \Psi(w)}
= \frac{-1}{24}+ \frac{1}{2} \partial_L \log \mathcal Z_{u,v}(L) 
\ee 
where the second term can be written as a derivative w.r.t. $L$ of the logarithm of the normalization
in \eqref{norm}, using that 
\be \label{PsiderivativeL}
\partial_L \Psi(w) = w^2 \Psi(w).
\ee 
Let us now define the normalized measure $\nu(dw)$
\be 
\nu(dw) = \frac{\Psi(w)}{  \int_{\I\R} \frac{dw'}{2\I\pi} \Psi(w') } \frac{dw}{2\I\pi}.
\label{eq:defnu}
\ee 
The second cumulant is given, for $u,v>0$, by 
\be \label{c2exact} 
c_2(u,v,L) =  
\int_{\I\R} \nu(dw_1) \int_{\I\R} \nu(dw_2) ( \kk(w_1,w_2) +  \delta(w_1-w_2) ) (w_1^2 - \int_{\I\R } \nu(dw) w^2 ),
\ee 
where the delta function above is defined by 
\be \label{defdelta} 
\int_{\I\R} \frac{dw_2}{2\I\pi} \delta(w_1-w_2) f(w_2) = f(w_1).
\ee 
Using the identity
\be \label{c2derivativeL}
 \partial_L \nu(dw) = \left( w^2 - \int_{\I\R } \nu(dw') (w')^2 \right) \,  \nu(dw) 
\ee 
which is a simple consequence of \eqref{PsiderivativeL} and \eqref{eq:defnu},
we see that the second cumulant can also be written as a total derivative (with respect to  $L$) of a simpler expression: 
\be \label{eq:secondcumintro}
c_2(u,v,L) =  \frac{1}{2} \partial_L \left( \int_{\I\R} \nu(dw_1) \int_{\I\R} \nu(dw_2) \bar \kk(w_1,w_2) \right)
\ee 
where we have defined the reduced kernel 
\be  \label{defkbar} 
 \bar \kk(w_1,w_2) = \kk(w_1,w_2) + \delta(w_1-w_2).
\ee 
Figure \ref{fig:plot} is a plot of $c_2(u,u,L=1)$ which shows that the second cumulant is not monotone as $u$ varies in $\R_+$. As $u \to +\infty$, it converges to a finite value, associated to Dirichlet boundary conditions $Z(0,t)=Z(L,t)=0$. 
\begin{figure}
    \centering
    \includegraphics[width=0.6\linewidth]{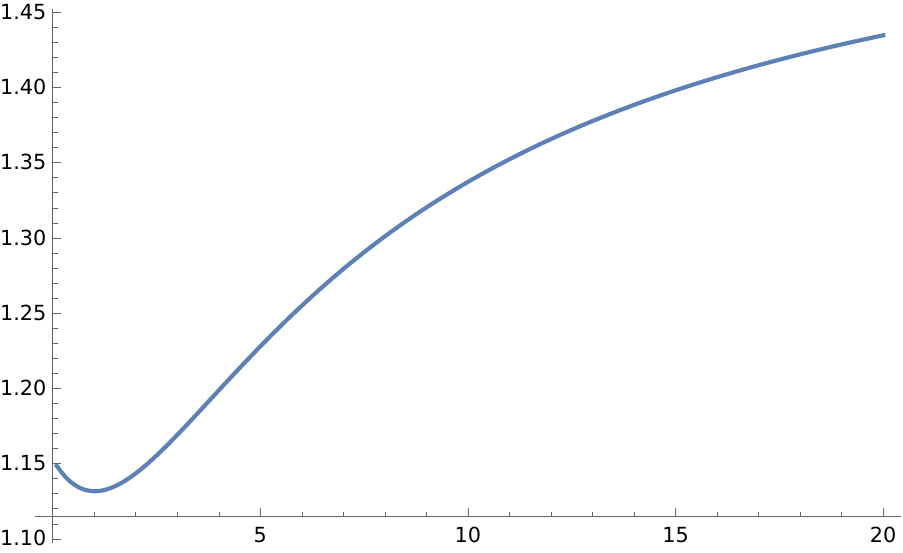}
    \caption{Graph of the function $u\mapsto c_2(u,u,1)$.}
    \label{fig:plot}
\end{figure}
\medskip 

Since $L$ is finite, we expect that the cumulants are analytic functions  in $u$ and $v$. Thus,  for $u,v$ not necessarily positive, the cumulants are given by analytic continuations (in $u$ or $v$ or both variables) of the expressions above, see Section \ref{sec:analytic}. In particular, when $u=-v$, 
\be 
c_1(u,-u,L)=\frac{-1}{24}+  \frac{1}{2}u^2
\ee 
which is the usual velocity of the KPZ height function on $\R$ for Brownian initial data with drift $u$. 
Regarding the second cumulant, it is given for $u+v=0$ by 
\be
c_2(u,-u,L) = \vert u\vert-\frac{1}{2 e^{Lu^2}} \int_{\I\R} \frac{dw}{2\I\pi} \Psi(w)\Big\vert_{v=-u}(w^2-u^2). 
\label{eq:c2u+v=0}
\ee 
In particular, when $u=v=0$, 
\be
c_2(0,0,L)=  \int_{\R}  x \coth(\pi x) e^{- L x^2} dx.
\label{eq:c200intro}
\ee

\begin{remark}
    The action of the operator $\kk$ on functions $\phi$ can be interpreted concretely. If we can write $\phi$ as 
    $$\phi(w) =f(w)-f(w+1)+ g(-w)-g(1-w)$$
    where the functions $f$ and $g$ are meromorphic functions with no poles to the right of $\I\R$, 
    \be 
\kk \phi(w) = 2 f(1+w) + 2g(1-w), 
\ee 
    see Appendix \ref{sec:operatork} for details.  The functions $f$ and $g$ also have an interpretation in terms of Bethe roots, see  \cite[Appendix E]{lazarescu2014bethe}. 
\end{remark}
\begin{remark}
    We expect that the first and second cumulant can be expressed as functionals of the stationary measure, as it is the case for periodic boundary conditions \cite{dunlap2023fluctuation, gu2024kpz}. For the first cumulant, we provide such expressions as functionals of Brownian motions in some special cases in Appendix \ref{sec:appendixfunctionals}. 
\end{remark}

\subsubsection{General expression for the cumulants}

To obtain the general cumulants we expand the equations \eqref{systemEmu1}, \eqref{systemEmu2}
and\eqref{eq:functionalequation} in powers of $\zeta$ and eliminate $\zeta$ order by order.
As a first step we write $U(w) = \sum_{n \geq 1}  U_n(w) \zeta^n$ and determine the coefficients $U_n(w)$. From 
\eqref{eq:functionalequation} we find
that they can
be computed by recurrence as 
\be 
U_n = \frac{2^{n-1}}{n} \Psi^n + \sum_{\ell =1}^{n-1}   \frac{2^{\ell-1} }{\ell} \Psi^\ell  \sum_{i=1}^{n-\ell} \frac{\ell^i}{i!}  \sum_{\begin{matrix} (p_1, \dots, p_i) \in \mathbb Z_{\geq 1}^i \\ p_1+\dots+p_i=n-\ell \end{matrix} } \prod_{j=1}^i \kk  U_{p_j}(w)
\label{eq:recurrenceforU}
\ee 
with $U_1(w)= \Psi(w)$. 

We are interested in coefficients of $E(s)$ in its expansion of powers of $s$, but equations \eqref{systemEmu1} and \eqref{systemEmu2} give us access to the expansion of $s$ and $E(s)$ in powers of $\zeta$. In other terms, we have access to coefficients $s_i$ and $\eps_i$ such that 
\be 
s = \sum_{n \geq 1}  s_n \zeta^n, \,\, E(s) = \sum_{n \geq 1} \eps_n \zeta^n.
\label{eq:relationcoeffs}
\ee
where 
\be 
s_n = \int_{\I\R} \frac{dw}{2\I\pi} U_n(w), \,\,\,\eps_n = \int_{\I\R} \frac{dw}{2\I\pi} w^2 U_n(w).
\ee 
Then,  we want to find the cumulants $c_k=c_k(u,v,L)$ such that $ E(s)=\sum_{k \geq 1}  \frac{s^k}{k!} c_k$, see \eqref{Emuexpansion}. Let 
\be d_{k,\ell}= \sum_{\begin{matrix}(i_1, \dots, i_{\ell})\in \mathbb Z_{\geq 1}^{\ell} \\ i_1+\dots+i_{\ell}=k\end{matrix} } \prod_{j=1}^{\ell} s_{i_j}.\ee
The cumulants $c_k$ can be computed through the recurrence 
\be  \label{recurrenceck}
\frac{c_k}{k!} = \frac{1}{s_1^k} \left(\eps_k - \sum_{\ell=1}^{k-1} d_{k,\ell} \frac{c_\ell}{\ell!} \right).  
\ee
This procedure allows to compute the $c_k$ iteratively and recovers the expressions for $c_1$ and $c_2$ given above.
The expression of $c_3$ is given below.

\subsubsection{Universal limit as $L\to\infty$, KPZ fixed point on the interval}
We  scale boundary parameters as $u=\frac{\widetilde u}{\sqrt{L}}$ and $v=\frac{\widetilde v}{\sqrt{L}}$,
with $\tilde u>0, \tilde v>0$. 
This is the natural scaling introduced in \cite{barraquand2022steady} which 
leads to a 2 parameter family of processes which were found
to define the stationary measure of the KPZ fixed point on the interval. 
As $L$ goes to infinity, we find that the cumulants behave as 
\begin{align}  \label{cklargetime} 
c_1(u,v,L) &= \frac{-1}{24} + \frac{1}{L} \widetilde c_1(\widetilde u, \widetilde v)+o\left(\frac{1}{L}\right)\\   
c_k(u,v,L) &= L^{\frac{k-3}{2}} \widetilde c_k(\widetilde u, \widetilde v) + o\left( L^{\frac{k-3}{2}} \right) \text{ for }k\geq 2. 
\label{ckklargetime} 
\end{align}
The $\widetilde c_k$ can be computed by extracting coefficients of $\widetilde E(\widetilde s) = \sum_{i=1}^{+\infty} \frac{\widetilde s^i}{i!} \widetilde c_k$ where 
\begin{align}
\widetilde s &=\frac{1}{2} \sum_{k \geq 1} \frac{1}{k} \widetilde \zeta^k \phi_k, \quad \phi_k = \int_{ \mathbb{R}} \frac{dy}{ 2 \pi}  \phi(y)^k  \\ 
\widetilde E &=- \frac{1}{4} \sum_{k \geq 1} \frac{1}{k} \widetilde \zeta^k \psi_k, \quad \psi_k  = \int_{ \mathbb{R}} \frac{dy}{ 2 \pi} y^2  \phi(y)^k .
\end{align}
where 
\be 
\phi(y) =  \frac{4 y^2}{(\widetilde u^2 + y^2)(\widetilde v^2 + y^2)} e^{-y^2}.  
\ee 
The $c_k$ can then be obtained from a recurrence similar to 
\eqref{recurrenceck}, see Section \ref{sec:cumulantslargeL} where formula  
are given for the lowest cumulants. 

\medskip 

The function $\widetilde E(\widetilde s)$ is a limit of $E(s)$ in the sense that 
\be 
\widetilde E(\widetilde s) = \lim_{L\to\infty} L^{3/2} \left( E(\widetilde s/\sqrt{L}) +\frac{\widetilde s}{24\sqrt{L}} \right)
\ee 
Upon the Legendre transformation in \eqref{legendre2} 
we obtain that the upper tail large deviation rate function of the height, defined in \eqref{eq:defPhimaths} 
takes at large $L$ the scaling form (making explicit the dependence in $L$)
\be 
\Phi_L(\widetilde H/L) \simeq  L^{-3/2} \varphi( \widetilde  H)  
\ee 
where the scaling function $\varphi(\widetilde H)$ is independent of $L$. As we discuss below in Remark \ref{rem:universalityGuillaume}, $\varphi$ is the large deviation rate function of the two-parameter KPZ fixed point on the interval.  It is determined from 
the Legendre transform in \eqref{legendre2}, which leads to the parametric system
\bea
&& \varphi'(\widetilde H) = \widetilde  s  \\
&& \widetilde  H  = \widetilde E'(\widetilde s)  
\eea 
which together with $\varphi(0)=0$ allows to compute $\varphi(\widetilde H)$.

\begin{remark}
The scaling \eqref{cklargetime}, \eqref{ckklargetime} agrees with the KPZ scaling in the following sense. One 
expects that for both large $t,L$ with $L \sim t^{2/3}$ the cumulants obey scaling forms of the type
\be 
\kappa_k(h(0,t))  = t^{k/3} f_k(L/t^{2/3}).
\ee 
If we assume
that this scaling form matches with 
$\kappa_k(h(0,t)) \sim \widetilde c_k  t L^{\frac{k-3}{2}}$,
in the limit studied here, i.e. $t \gg L^{3/2} \gg 1$,
we conclude that $f_k(y) \sim_{y \to 0}  \widetilde c_k y^{\frac{k-3}{2}}$. If that 
is the case, although the functions $f_k$ may depend on the initial condition, their behavior at small argument
should be universal. 
\label{rem:universalityPierre}
\end{remark} 

\begin{remark}
    The scaling above is consistent with the prediction that $h(x,t)$ converges to the KPZ fixed point, i.e. 
 \be  
 \frac{1}{\sqrt{L}} \left( h(L y, L^{3/2}\tau)+\frac{L^{3/2}\tau}{24}\right) \xRightarrow[L\to\infty]{}\mathfrak h(y,\tau). 
 \label{eq:cvKPZfixedpoint}
 \ee
   Indeed, let us assume that \eqref{eq:defPhimaths} holds as well when $L$ goes to infinity, as long as $t\gg L^{3/2}$. This suggests to let $t=L^{3/2}\tau$, and send simultaneously $\tau$ and $L$ to infinity in \eqref{eq:defPhimaths}. Assuming that the limits in $\tau$ and $L$  can be taken in any order, we find using \eqref{eq:cvKPZfixedpoint}
   
   \begin{align} 
   \lim_{L\to\infty} L^{3/2} \Phi_L(v/L)&= \lim_{L\to\infty}  L^{3/2}
\lim_{\tau\to\infty}  \frac{-1}{\tau L^{3/2}} \log \mathbb P\left( \frac{ h(0,\tau L^{3/2})}{\tau L^{3/2}} >\frac{-1}{24}+\frac{v}{L} \right) \\&= 
\lim_{\tau\to\infty} \frac{-1}{\tau } 
\lim_{L\to\infty}   \log \mathbb P\left( \frac{ h(0,\tau L^{3/2})}{\tau L^{3/2}} >\frac{-1}{24}+\frac{v}{L} \right) \\
&= \lim_{\tau\to\infty}   \frac{-1}{\tau } \log \mathbb P\left( \frac{ \mathfrak h(0,\tau)}{\tau } > v \right)
 \end{align} 
 Hence, under this assumption of exchange of limits, we recover the fact that 
 \be 
\Phi_L(v/L) \sim \frac{1}{L^{3/2}} \varphi( v)
 \ee 
where $\varphi$ is the upper tail large deviation rate function for the KPZ fixed point, i.e. 
\be 
\varphi(y) = \lim_{\tau\to\infty} \frac{-1}{\tau} \log \mathbb P\left( \frac{ \mathfrak h(0,\tau)}{\tau } > y \right).
\ee 

\label{rem:universalityGuillaume}
\end{remark}

\subsection{Mathematical aspects}
The results stated in the present paper are obtained using methods introduced in the physics literature and they are not proved according to the standards of rigor of the mathematics literature. Let us comment on the main mathematical issues raised by these methods,  and potential ways to circumvent them. As explained above, we use two different methods. The first one,  presented in Section \ref{sec:limitLazarescu},  relies on taking a limit of the large time cumulants of the ASEP. As we have already mentioned, this poses the following issue:  it is proved that the fluctuations of ASEP height function converges to the KPZ equation \cite{corwin2018open, parekh2019kpz},  but there is no guarantee that the large time limit of the cumulants -- which encode the upper tail large deviation rate function -- should converge as well. At the level of ASEP, the expression for the (large time) cumulant generating function relies on the computation of the top eigenvalue of a deformed Markov matrix. This is a standard trick (see e.g. \cite{derrida1998exact, lebowitz1999galavotti}) and we believe that it can be mathematically justified. The computation of this top eigenvalue relies on a perturbative argument in the original references  \cite{gorissen2012exact, lazarescu2013matrix}, but an exact derivation is proposed using the functional Bethe ansatz in \cite{lazarescu2014bethe}. It could be interesting to adapt the latter method to other integrable probabilistic  models such as the stochastic six-vertex model. 

\medskip 
The most obvious way to avoid the issue described above is to work directly at the level of the KPZ equation. This is what we do in Section \ref{sec:BrunetDerrida}. However, the approach we employ there suffers several mathematical issues. First, it relies on the replica trick, meaning that we interpolate an unknown function of the variable $s\in \R$ based on the knowledge of the function for $s=n$, when $n$ is a positive integer. In the work of Brunet and Derrida \cite{brunet2000ground, brunet2000probability}, they consider together the series expansion in $n$ and the series expansion in $c$, the strength of the noise, and notice that the coefficients in $c$ are polynomials in $n$, hence analytic functions of $n$. The resummation of these coefficients as a power series in $c$ yields the conjectural cumulant generating function.  
Moreover, the approach relies on the completeness of Bethe ansatz eigenfunctions for the Lieb-Liniger Hamiltonian $H_n$ defined in \eqref{eq:defHn}. Finally, in our computation of the ground state energy, following \cite{brunet2000ground, brunet2000probability}, we pick certain solutions of functional equations which admit several solutions. The solutions that we pick are the simplest, in a certain sense, and we believe that these are the ones corresponding to the ground state energy while other solutions should correspond to excited states.

\medskip 
In order to obtain mathematical proofs of the results derived in the present paper, it may be useful to consider periodic or two-sided boundary models for which a richer mathematical structure is available. Promising candidates are periodic Schur processes \cite{borodin2007periodic}, free boundary Schur processes \cite{betea2018free} and their Macdonald analogues \cite{he2023periodic}. These models of random partitions are related to growth processes on bounded domains with either periodic or two sided boundary conditions \cite{betea2019periodic, betea2019new, betea2022peaks}.  It would be particularly interesting to compare cumulant generating functions obtained as the top eigenvalue of deformed Markov matrices and limits of $\log \mathbb E\left[ e^{z\lambda_1} \right]$ where $\lambda_1$ is the largest part of a partition $\lambda$ distributed according to the  periodic or free boundary Schur or Macdonald measure.

\section{Limit of the large-time cumulants of ASEP}
\label{sec:limitLazarescu}

\subsection{Definition of the model}

Let us first recall the definition of the asymmetric exclusion process (ASEP) with open boundaries. We will use here the notations of 
\cite{LazarescuThesis,gorissen2012exact} 
with the only change that $L$ there is denoted $\ell$ here. The system is a one dimensional lattice gas on $\ell$ sites. At any given time, 
each site $1 \leq i \leq \ell$ is either empty or occupied by at most one particle. A configuration of particles is described by a collection of occupation numbers $( \tau_i )_{1\leq i\leq \ell}$ where $\tau_i=1$ when site $i$ is occupied and 
$\tau_i=0$ when site $i$ is empty. The model depends on the bulk parameter $q$ and on the boundary parameters $\alpha, \beta, \gamma, \delta$. 
The dynamics is a continuous time Markov process $\tau(t)$ on the space state $( \tau_i )_{1\leq i\leq \ell} \in \{0,1\}^\ell$ defined as follows: 
 
At any given time $t>0$ and for any $i$, $1 \leq i \leq \ell-1$, each particle jumps from $i$ to $i+1$ with exponential rate $\tau_{i}(1-\tau_{i+1})$
and from $i+1$ to $i$ with exponential rate $q \tau_{i+1}(1-\tau_{i})$. In addition a particle at site $1$ is created 
with exponential rate $\alpha (1- \tau_1)$ and annihilated with exponential rate $\gamma \tau_1$, and a particle at site $\ell$ is created 
with exponential rate $\delta (1- \tau_\ell)$ and annihilated with exponential rate $\beta \tau_\ell$. All these events are independent.

Physically this is equivalent to consider reservoirs at both ends with imposed densities $\rho_a, \rho_b$ satisfying the mean current conservation conditions 
$j_a=(1-q) \rho_a (1-\rho_a)=\alpha(1-\rho_a)-\gamma \rho_a$ and 
$j_b=(1-q) \rho_b (1-\rho_b)=\beta \rho_b -\delta (1-\rho_b)$ \cite{derrida1993exact}. Hence, it is convenient to introduce parameters 
\begin{align}
    a&= \frac{1}{2\alpha} \left( (1-q-\alpha-\gamma) + \sqrt{(1-q-\alpha-\gamma)^2+4\alpha\gamma} \right)\\
    \widetilde a&= \frac{1}{2\alpha} \left( (1-q-\alpha-\gamma) - \sqrt{(1-q-\alpha-\gamma)^2+4\alpha\gamma} \right)\\
    b&=\frac{1}{2\beta} \left( (1-q-\beta+\delta) + \sqrt{ (1-q-\beta+\delta)^2+4\beta\delta} \right)\\
    \widetilde b&=\frac{1}{2\beta} \left( (1-q-\beta+\delta) - \sqrt{ (1-q-\beta+\delta)^2+4\beta\delta} \right)
\end{align}
so that 
\be 
\rho_a=\frac{1}{1+a}, \;\;\;\rho_b=\frac{b}{1+b}. 
\ee 
In other terms, we have parametrized the jump rates as 
\bea  
&&     \alpha= \frac{(1-q)}{(1+a)(1+\widetilde a)} 
 , \quad  \quad    \gamma= -\frac{a\widetilde a(1-q)}{(1+a)(1+\widetilde a)} \\ 
&&          \beta= \frac{(1-q)}{(1+b)(1+\widetilde b)} , \quad  \quad  
     \delta = -\frac{b\widetilde b(1-q)}{(1+b)(1+\widetilde b)}
\eea 
We will sometimes restrict to parameters $\alpha, \beta, \gamma, \delta$ solving Liggett's condition \cite{liggett1975ergodic}
\begin{equation}
   \alpha+\frac{\gamma}{q}=1, \;\;\; \beta+\frac{\delta}{q}=1,
    \label{eq:Liggett}
\end{equation}
which corresponds to setting $\widetilde a=\widetilde b=-q$ so that $\alpha=\rho_a=\frac{1}{1+a}$ and $\delta = q\rho_b=\frac{q b}{1+b} $. 

We may slightly extend the definition of ASEP by considering it as a Markov process on $\lbrace 0,1\rbrace^{\ell}\times \Z$, described by occupation numbers $\tau\in \lbrace 0,1\rbrace^{\ell}$, as well as the net number $N_t\in \Z$ of particles that have entered the system, from the left reservoir to site $1$, minus the number of particles that have exited the system from site $1$ to the left reservoir since time $t=0$.  On this extended state space, we define a height function $h_t(i)$, following \cite{corwin2018open}, as 
\begin{equation}
    h_t(i) = -2N_t+ \sum_{j=1}^i (2\tau_j-1), \;\;\; h_t(0)=-2N_t. 
    \label{eq:defhaightfunction}
\end{equation}

\begin{remark}
The parametrization involving $a, \widetilde a, b, \widetilde b$ is also convenient to apply the matrix product ansatz. One may chose matrices $D=1+d$, $E=1+e$, where $de-qed=1-q$ that is the $q$-oscillator algebra. Explicitly, one may chose $d=\sum_{n=1}^{\infty} (1-q^n) \ket{n-1}\bra{n}$ and $e=\sum_{n=1}^{\infty} \ket{n+1}\bra{n}$. With this choice, the boundary vectors are simply chosen \cite{sasamoto1999one} as $\bra{W} = \sum_{n=1}^{\infty} W_n \bra{n}$ and $\ket{V} = \sum_{n=1}^{\infty} V_n \ket{n}$ with 
\be 
W_n = H_n(a,\widetilde a), \;\;\; V_n = \frac{H_n(b,\widetilde b)}{(q)_n},
\ee 
where $H_n$ are $q$-Hermite polynomials
\be 
H_n(x,y) = \sum_{k=0}^n \frac{(q)_n}{(q)_k(q)_{n-k}}x^ky^{n-k}
\ee 
 We refer to \cite[chapter 2]{LazarescuThesis} for more references and details. It is interesting to note that the same boundary weights also arise in the half-space Hall-Littlewood process \cite{he2024boundary}.
\end{remark}

\subsection{Large time cumulants of open ASEP}

We now recall the results of \cite{LazarescuThesis,gorissen2012exact}. 
We define the function 
\be \mathcal E(\mu)=\lim_{t\to\infty} \frac{1}{t} \log \mathbb E\left[ e^{\mu N_t}\right]
\ee
where we recall that $N_t$ is defined just above \eqref{eq:defhaightfunction}. The function $\mathcal E(\mu)$ is computed as the principal eigenvalue of a deformation of the Markov transition matrix of the system, where the matrix elements corresponding to a transition increasing $N_t$ by one (resp. decreasing $N_t$ by one) are multiplied by $e^{\mu}$ (resp. $e^{-\mu}$). 

In terms of the ASEP height function \eqref{eq:defhaightfunction}, 
\be 
\mathcal E(\mu)=\lim_{t\to\infty} \frac{1}{t} \log \mathbb E\left[ e^{-\mu h_t(0)/2}\right].
\label{eq:defcalE}
\ee 

We start from the equations in Section III.3 of \cite{LazarescuThesis}, more precisely  \cite[Eqs III.64 to III.69]{LazarescuThesis}. Let us define the function 
\be 
F_{\ell}(z) = \frac{\left( (1+z)(1+z^{-1}) \right)^\ell (z^2, z^{-2})_{\infty} }{(az, az^{-1}, \widetilde a z, \widetilde az^{-1},bz, bz^{-1}, \widetilde b z, \widetilde bz^{-1})_{\infty}}
\label{eq:defFell}
\ee
where $(x)_{\infty}=\prod_{i=0}^{\infty} (1-q^ix)$ and $(x_1,\dots,x_k)_{\infty} = (x_1)_{\infty} \dots (x_k)_{\infty}$.  This function arises in the study of the open ASEP steady-state using Matrix Product Ansatz \cite{sasamoto1999one}. 

The expression for the principal eigenvalue $E(\mu)$ is obtained using a deformation of the Matrix Product Ansatz \cite{gorissen2012exact}. It  takes a form similar to the case of periodic boundary conditions \cite{prolhac2010tree}. One introduces the kernel 
\be 
{\cal K}(z,\widetilde z) = 2 \sum_{k=1}^{\infty} \frac{q^k}{1-q^k} \left( \left(\frac{z}{\widetilde z}\right)^k + \left(\frac{\widetilde z}{z}\right)^k \right) 
= 2 \sum_{k=1}^{\infty} \frac{q^k z}{\widetilde z - q^k z} + \frac{q^k \widetilde z}{z - q^k \widetilde z}
\ee 
and the convolution operator
\bea \label{convolu} 
X[f](z) = \oint \frac{d \widetilde z}{2 \I \pi \widetilde z} \mathcal K(z,\widetilde z) f(\widetilde z),
\eea  
where the contour is a positively oriented circle around $0$ of radius $1$ (this requires that $\vert a\vert, \vert \widetilde a\vert, \vert b\vert, \vert \widetilde b\vert <1$, otherwise the contour must be chosen differently).  Then, we define a function $W(z)$ solving the functional equation 
\be 
W(z) = - \frac{1}{2} \log( 1 - B F(z) e^{X[W](z)} ).
\label{eq:functionalequationASEP}
\ee 
 Finally, the function $\mathcal E(\mu)$ is defined by eliminating $B$ in the system of equations 
 \begin{align}    \label{mueq} 
     \mu& =-\oint \frac{dz}{2\I\pi z} W(z)\\
     \mathcal E(\mu) &= -(1-q) \oint \frac{dz}{2\I\pi z} \frac{W(z)}{(1+z)(1+z^{-1})}. \label{Eeq} 
 \end{align}

\begin{remark} \label{rem:periodic}
The above formulae also apply for the periodic ASEP with $\ell$ sites and $N$ particles, up to some simple modifications.
This was worked out in \cite{prolhac2010tree}, see also \cite{LazarescuThesis} Section II.2.2.
and \cite{gorissen2012exact}. These modifications amount to substitute $W \to W_{\rm per}$ where $W_{\rm per}$ now obeys
\be \label{functasepperiodic}
W_{\rm per}(z) = - \log(1- B F_{\rm per}(z) e^{X[W_{\rm per}](z)})
\ee 
where $X[W]$ is defined as in \eqref{convolu} but with a kernel $\mathcal K_{\rm per}=\mathcal K/2$, and where
$F_{\rm per}(z)=(1+z)^\ell/z^N$.
\end{remark}

\subsection{Convergence of open ASEP to the KPZ equation} 
\label{sec:ASEPtoKPZ}
Let us now describe the convergence of ASEP to the KPZ equation proved in \cite{corwin2018open}. We consider the scalings, for $\eps>0$, 
\begin{equation} 
q=e^{-\eps}, \quad  \quad i= 4 x\eps^{-2}, \quad \quad \ell= 4  L \eps^{-2}. 
\label{eq:scalings}
\end{equation}
We scale boundary  parameters as 
\be 
a=q^u, \;\; \widetilde a=-q, \;\;b=q^{v}, \;\; \widetilde b=-q,
\label{eq:scalingsboundary}
\ee 
so that the boundary rates  $\alpha, \beta, \gamma, \delta$ are scaled as 
\be 
\rho_a = \frac{1}{2} + \frac{u}{4}  \eps + o(\eps), \quad  \quad \rho_b = \frac{1}{2} - \frac{v}{4}  \eps+o(\eps).
\ee 
For the moment, we assume that $\rho_a>\rho_b$, so that $ab<1$ and $u+v>0$. This corresponds to the maximal current phase.  Furthermore, we assume for the moment, for  technical convenience, that $u,v>0$ such that $\vert a\vert , \vert b\vert <1$ and we can apply the integral formulas from the previous paragraph without changing integration contours. 

Then, we define the discrete Hopf-Cole transform\footnote{we define the Hopf-Cole transform in a slightly different manner than \cite{corwin2018open}. We consider the usual Hopf-Cole transform $q^{-h_{T}(j)}e^{\nu T}$ with $\nu=(\sqrt{p}-\sqrt{q})^2$  with $T=\frac{t}{2\sqrt{q}}$. This ensures that our $Z_t(j)$ has the same definition as $Z_t(j)$ in \cite[Definition 2.12]{corwin2018open}.}
\be 
Z_t(j) = q^{-h_{\frac{t}{2\sqrt{q}}}(j)/2}e^{\left(1-\sqrt{q}\right)^2 \frac{t}{2\sqrt{q}} }. 
\ee 
 Theorem 2.18 in \cite{corwin2018open} states that as $\eps$ goes to zero, 
 \be Z_{16 \eps^{-4} T}(4\eps^{-2} x) = Z(x,T)+o(1)
 \label{eq:discreteHopfColetoSHE}
 \ee  
  where $Z$ is a solution of the SHE equation \eqref{eq:SHE} with boundary parameters $u$ and $v$. Strictly speaking, \cite{corwin2018open} has restrictions on the range of boundary parameters, and those restrictions are removed in \cite{parekh2019kpz}. Moreover, the convergence \eqref{eq:discreteHopfColetoSHE} is proved in the sense  of weak convergence of the left-hand-side to the right-hand-side in the space of continuous functions, but the notation used in \eqref{eq:discreteHopfColetoSHE} will be convenient for later purposes. 
To compare \eqref{eq:discreteHopfColetoSHE} with the statement in \cite{corwin2018open}, one should relate the parameter $N$ there to our parameter $\eps$ by $N=4\eps^{-2}$. Note that the parameter $\eps$ in \cite{corwin2018open} has a different meaning. 

Letting $T=\tau \sqrt{q}$ and taking the logarithm of both sides in \eqref{eq:discreteHopfColetoSHE}, we obtain that 
\be 
\frac{\eps}{2} h_{8\eps^{-4}\tau}(4\eps^{-2}x) + 8 \eps^{-4}(1-e^{-\eps/2})^2 \tau = \frac{\eps}{2 }h_{8\eps^{-4}\tau}(4\eps^{-2}x) + 2\tau \eps^{-2} -\tau \eps^{-1} + \frac{7 \tau}{24} +O(\eps) 
\ee 
converges to $h(x,\tau \sqrt{q})=h(x,\tau) + O(\eps)$ where $h$ is the solution of the KPZ equation on the interval \eqref{eq:KPZ}. Note that this scaling is consistent with \cite[Definition 7.5]{barraquand2018stochastic}. 

To conclude, we have obtained that 
\be 
h_{8\eps^{-4}\tau}(0) = 2\eps^{-1}\left( h(0,\tau) -2\tau\eps^{-2} +\tau\eps^{-1} -\frac{7\tau}{24} + o(1) \right).
\label{eq:conclusionscalingh}
\ee 

\subsection{Derivation of the main result}

With $q= e^{-\epsilon}$ and the change of variable
$y= k \epsilon$, $z= e^{\epsilon w}$, $\widetilde z= e^{\epsilon \widetilde w}$,  one obtains
\be 
\mathcal K(z,\widetilde z) -4  \sum_{k=1}^{\infty} \frac{q^k}{1-q^k} 
\sim \frac{2}{\epsilon} \int_0^{+\infty}  dy \frac{e^{-y}}{1 - e^{-y}} (e^{y (w-\widetilde w)} + e^{- y (w-\widetilde w) } - 2) 
\ee 
Using 
\be 
\int_0^{+\infty}  dy \frac{1}{e^y - 1} (e^{y w} -1) = - \gamma_E - \psi(1-w)  
\ee 
One finds
\be 
\mathcal K(z,\widetilde z) -4  \sum_{k=1}^{\infty} \frac{q^k}{1-q^k} 
\sim \frac{2}{\epsilon} (-2 \gamma_E - \psi(1- w+ \widetilde w) - \psi(1+ w - \widetilde w) ).
\ee 
Now we consider the limit of $F_\ell(z)$ defined in \eqref{eq:defFell}. We have 
\be 
(q^x)_{\infty} = \frac{(1-q)^{1-x} (q)_{\infty} }{\Gamma_q(x)} 
\label{eq:qGamma}
\ee
where $\Gamma_q(x)$ converges to $\Gamma(x)$ as $q\to 1$. 
Further, using \cite[Proposition 2.3]{corwin2024stationary}
\be 
(-q^w)_{\infty} = e^{\frac{\pi^2}{12\eps} -(w-1/2)\log 2 + O(\eps)}
\ee 
so that there exists a constant $C_1(\eps)$ such that 
\be 
\frac{(\widetilde az)_{\infty}(\widetilde a/z)_{\infty}(\widetilde bz)_{\infty}(\widetilde b/z)_{\infty}}{C_1(\eps)} \xrightarrow{\eps\to 0}
1
\label{eq:convergencejunk}\ee
Hence, under the scalings \eqref{eq:scalings} and the same change of variables $z= e^{\epsilon w}=q^{-w}$ as above, \eqref{eq:qGamma} and \eqref{eq:convergencejunk} ensure that there exists $C(\epsilon,u,v)$ such that 
\be 
\frac{F_\ell(z)}{C(\epsilon,u,v) } \xrightarrow{\eps\to 0}  4^{\ell} \Psi_L(w)
\ee 
where $\Psi_L(w)$ is the same function as $\Psi(w)$ defined in \eqref{defPsi} where we have made explicit the $L$-dependence.

Now we examine the limit of the functional equation \eqref{eq:functionalequationASEP}. Letting $2\zeta=B C(\eps, u,v)4^{\ell}$ the functional equation \eqref{eq:functionalequationASEP} becomes \eqref{eq:functionalequation}. It is convenient to rename the function $z\mapsto W(z)$, which depends on a free parameter $B$, as a function $w\mapsto U(w)$, depending on a free parameter $\zeta$. Let us now consider the limit of \eqref{mueq} and \eqref{Eeq}. Using \eqref{mueq}, we find that 
 $\mu\sim-\eps s $ with 
 \be s= \int_{\I\R} \frac{dw}{2\I\pi} U(w).\ee 
 Then, we have that 
\begin{align}
  \mathcal E(\mu) &=   - (1-q) \int \frac{dz W(z)}{2 \I \pi z (1+z)(1+z^{-1})}  \\ 
  &= - (1-q) \int \frac{dz}{2 \I \pi z} \frac{F_{\ell-1}(z)}{F_\ell(z)} W(z)\\
  &= -\eps(1-e^{-\eps}) \int \frac{dw}{2 \I \pi} \frac{1}{4} \left( 1 - \frac{\eps^2}{4} \partial_L \log \Psi_L(w) \right) 
U(w) +o(\eps^4)\\
&=- \left( \frac{\epsilon^2}{4} -\frac{\eps^3}{8}+\frac{\eps^4}{24}\right) \int \frac{dw}{2 \I \pi} U(w) 
+ \frac{\epsilon^4}{16}  \int_{\I \mathbb{R}} \frac{d w}{ 2 \I \pi} w^2 U(w) +o(\eps^4)\\ 
&= - s \left( \frac{\epsilon^2}{4} -\frac{\eps^3}{8}+\frac{\eps^4}{24}\right) 
+ \frac{\epsilon^4}{16}  \int_{\I \mathbb{R}} \frac{d w}{ 2 \I \pi} w^2 U(w)+o(\eps^4)
\label{eq:expansioncalE1}
\end{align}
On the other hand, by the definition of $\mathcal E(\mu)$ in \eqref{eq:defcalE}, plugging the scaling relation in \eqref{eq:conclusionscalingh}, we obtain 
\begin{align}
\mathcal E(\mu) &= \lim_{\tau\to \infty} \frac{1}{8\eps^{-4}\tau} \log \mathbb E e^{s \left(h(0,\tau)  -2 \eps^{-2}\tau+ \eps^{-1} \tau - \frac{7\tau}{24} +o(1) \right)  } \\
&= \frac{\eps^{4}}{8} \left(  E(s) +  s\left(-2\eps^{-2} +\eps^{-1} -\frac{7}{24}\right) \right)+ o(\eps^4)
\label{eq:expansioncalE2}
\end{align}
Combining \eqref{eq:expansioncalE1} and \eqref{eq:expansioncalE2}, we obtain that 
\be 
E(s) = \frac{-s}{24} + \frac{1}{2}\int_{\I\R} \frac{dw}{2\I\pi} w^2U(w).
\ee 
Hence we have shown the results announced in \eqref{subsubsec:functional}.

\begin{remark}
For some special values of $u$ and $v$, the function  $\Psi(u)$ simplifies. For instance,
for $u=0, v=1/2$  one has $\Psi(w)=4\pi e^{w^2 L}$ so that   the cumulants
can be related to the periodic case for which $\Psi_{\rm per}(w)=e^{w^2 L/2}$. 
More generally, for $v=u+1/2$ and $u=2 k$ where $k \in \mathbb{N}$ one has $\Psi(u)= Q_k(w^2) e^{w^2 L}$
where $Q_k$ is a polynomial of degree $k$. 
\end{remark}
\begin{remark}

The cumulant generating function  $\mathcal E(\mu)$ of ASEP satisfies Gallavotti-Cohen symmetry \cite{lebowitz1999galavotti}  (see \cite[Section I.2.3 and III.1.2]{LazarescuThesis} for details): 
$$ \mathcal E(\mu)  = \mathcal E(\mu_0-\mu), \; \text{ with } \mu_0=-\log \frac{\alpha\beta}{\gamma\delta q^{\ell-1}}.$$
In the scaling limit \eqref{eq:scalings},  \eqref{eq:scalingsboundary}, and $\mu\sim -\eps s$, one finds that $\mu_0 =  4L \eps^{-1}+ O(\eps)$, hence the symmetry is lost in the limit to the KPZ equation.   
\end{remark}

\section{Formulas for the cumulants}
\label{sec:cumulants}
\subsection{General principle} 
Our goal is to extract the coefficients $c_k$ of the cumulant generating series $E(s)$
\be 
 E(s)=\sum_{k\geq 1} \frac{s^k}{k!} c_k.
 \label{eq:series1}
\ee
The general principle is that given the equations \eqref{systemEmu1}, \eqref{systemEmu2}, \eqref{eq:functionalequation}, we first 
extract the series coefficients of $E$ and $s$ in the variable $\zeta$, and then combinatorially retrieve the coefficients of $E$ 
in the variable $s$. These coefficients are expressed as simple integrals involving the coefficients of the function $U(w)$ in powers of $\zeta$. 

More precisely, we set 
\be 
U(w) = \sum_{k \geq 1} U_k(w) \zeta^k,\;\; s = \sum_{k\geq 1} s_k \zeta^k \;\;\; E = \sum_{k\geq 1} \eps_k \zeta^k.
\ee
so that \eqref{systemEmu1} and \eqref{systemEmu2} imply that 
\be 
s_k = \int_{\I\R} \frac{dw}{2\I\pi} U_k(w), \;\; \eps_k = \frac{1}{2}\int_{\I\R} \frac{dw}{2\I\pi} w^2 U_k(w).
\ee 
Let us assume for the moment that the $U_k(w)$ are known.

By plugging the  series expansion of $s$ in powers of $\zeta$ defined in \eqref{eq:relationcoeffs}
into \eqref{eq:series1}, and extracting the coefficient of $\zeta^k$ in $E$, we find that 
\be 
\eps_k = \sum_{1 \leq \ell \leq k} \frac{c_\ell}{\ell!} d_{k,\ell} 
\label{eq:epsilonintermsofc}
\ee
where, for $\ell \leq k$
\be 
d_{k,\ell}= \sum_{\begin{matrix}(i_1, \dots, i_{\ell})\in \Z_{\geq 1}^{\ell} \\ i_1+\dots+i_{\ell}=k\end{matrix} } \prod_{j=1}^{\ell} s_{i_j}.
\ee 
and $d_{k,\ell}=0$ for $\ell > k$.
Using  notations $\vec C = (c_i/i!)_{i\geq 1}, \vec \eps = (\eps_i)_{i\geq 1}, D=(d_{i,j})_{i,j\geq 1}$, 
we have the equation $\vec\eps = A\vec C$, hence formally we find $\vec C = D^{-1}\vec \eps$. 

The inverse of the triangular matrix $D$ does not have a particularly nice form. However, since the diagonal coefficients are  $d_{kk}=s_1^k$, Equation \eqref{eq:epsilonintermsofc} implies that 
\be \frac{c_k}{k!} = \frac{1}{s_1^k} \left(\eps_k - \sum_{\ell=1}^{k-1} d_{k,\ell} \frac{c_\ell}{\ell!} \right), \ee
which allows to compute the $c_k$ recursively.

Let us give explicit expressions for the first few cumulants. The top $4\times 4$ principal submatrix of $D$ is
\be  \begin{pmatrix}
 s _1 & 0 & 0 & 0  \\
 s _2 & s _1^2 & 0 & 0  \\
 s _3 & 2 s _1 s _2 & s _1^3 & 0  \\
 s _4 & s _2^2+2 s _1 s _3 & 3 s _1^2 s _2 & s _1^4 
\end{pmatrix}.\ee
Its inverse is 
\be
\begin{pmatrix}
 \frac{1}{s _1} & 0 & 0 & 0  \\
 -\frac{s _2}{s _1^3} & \frac{1}{s _1^2} & 0 & 0  \\
 \frac{2 s _2^2-s _1 s _3}{s _1^5} & -\frac{2 s _2}{s _1^4} & \frac{1}{s _1^3} & 0  \\
 -\frac{5 s _2^3-5 s _1 s _3 s _2+s _1^2 s _4}{s _1^7} & \frac{5 s _2^2-2 s _1 s _3}{s _1^6} & -\frac{3
   s _2}{s _1^5} & \frac{1}{s _1^4}  
\end{pmatrix}.\ee
Hence, 

\begin{align}
    c_1 &= \frac{-1}{24} + \frac{\eps_1}{s_1}\\  
    c_2&=  2 \frac{ \eps_2 s_1 -\eps_1 s_2}{s_1^3}\\ 
    c_3 &= 3! \frac{\eps_3 s_1^2 - 2 \eps_2 s_2s_1 + \eps_1(2s_2^2 - s_1s_3) }{s_1^5}\\  
    c_4&= 4! \frac{\eps_4 s_1^3 -3 \eps_3 s_2 s_1^2 + \eps_2 s_1(5s_2^2 - 2s_1s_3) - \eps_1 (5 s _2^3-5 s _1 s _3 s _2+s _1^2 s _4)}{s_1^7}\\ 
\end{align}
It will be convenient to use renormalized coefficients
\be 
m_i = \frac{s_i}{s_1}, \;\;\; e_i=\frac{\eps_i}{s_1}.
\ee 
Then, we have 
\begin{subequations}
    \begin{align}
    c_1 &= \frac{-1}{24} + e_1 \\  
    c_2&=  2 \frac{ e_2  -e_1 m_2}{s_1}\\ 
    c_3 &= 3! \frac{e_3  - 2 e_2 m_2 + e_1(2m_2^2 - m_3) }{s_1^2}\\  
    c_4&= 4! \frac{e_4 -3 \e_3 m_2  + e_2 (5m_2^2 - 2m_3) - e_1 (5 m_2^3 - 5 m_3 m_2+ m_4)}{s_1^3}
\end{align}
\label{eq:firstcumulantscoeffs}
\end{subequations}

\begin{remark}
It is possible to write explicitly the coefficient $c_k$ for arbitrary $k$ in terms of the $s_i$ and $\eps_j$. Indeed, using Lagrange inversion formula, one can find coefficients $\zeta_i$ such that the map $s\mapsto \zeta(s)$,   the inverse of the map $\zeta\mapsto s(\zeta)$, is written in power series as $\zeta=\sum_{k\geq 1} \zeta_k s^k$. The coefficient $\zeta_k$ can be explicitly written in terms of  $(s_j)_{1\leq j\leq k}$ via incomplete Bell polynomials.  Then, using that $E(s) = \sum_{k\geq 1} \eps_k (\zeta(s))^k$, it is easy to extract the coefficient of $\zeta^k$, which is $c_k/k!$. In the context of the periodic TASEP, this explicit computation was first performed in \cite{prolhac2010tree}.
\end{remark}

\begin{remark}  
    One has the alternative formula 
    \be 
   \frac{1}{\ell!} d_{k,\ell} = 
   \sum_{\begin{matrix} m_1\geq 0,m_2 \geq 0,\dots \\ \sum_{j\geq 1} m_j = \ell \\
   \sum_{j\geq 1} j m_j = k \end{matrix}} \prod_{j \geq 1} \frac{s_j^{m_j}}{m_j!} 
    \ee 
    This leads to the generating function
    \be 
    \sum_{1 \leq \ell, k  } \frac{1}{\ell!} d_{k,\ell} x^k y^\ell = \exp\left( \sum_{j \geq 1} y x^j s_j \right) 
    \ee 
\end{remark}

\subsection{Computation of the coefficients $U_n$}
Recall that the coefficients $s_k$ and $\eps_k$ are expressed in terms of the function $U_k(w)$. In this section we explain how to  compute the coefficients $U_k(w)$. 
The functional equation \eqref{eq:functionalequation} yields the following expressions for the first few coefficients: 
\begin{subequations}
\begin{align} 
 U_1 &= \Psi \\
 U_2 &= \Psi^2 + \Psi ({\sf k} \Psi)  \\
 U_3 &= \frac{4}{3} \Psi^3 + 2 \Psi^2 ({\sf k} \Psi) + \frac{1}{2}  \Psi  ({\sf k} \Psi)^2 + \Psi ( {\sf k} \Psi^2 ) 
 + \Psi {\sf k}( \Psi ({\sf k} \Psi)) 
\end{align} 
\label{U_k}
\end{subequations}
More generally, recall that $U_n(w)$ denotes the coefficient of $\zeta^n$ in $U(w)$. Replacing $U(w)$ by $\frac{-1}{2} \log(1- 2 \zeta \Psi e^{\kk U})$, and using the series expansion of the logarithm and exponential functions, we obtain the following (below we denote by $[\zeta^n]f$  the coefficient of $\zeta^n$ when we expand the expression $f$ in power series in $\zeta$). 
\begin{align}
    U_n(w) &= [\zeta^n]  \frac{1}{2} \sum_{\ell \geq 1} \frac{1}{\ell} \left( 2\zeta\Psi e^{\kk U} \right)^{\ell} \\ 
    &= [\zeta^n]\sum_{\ell \geq 1} \frac{2^{\ell-1} \zeta^\ell}{\ell} \Psi^\ell  e^{\ell \kk  U}\\ 
     &= [\zeta^n]\sum_{\ell \geq 1} \frac{2^{\ell-1} \zeta^\ell}{\ell} \Psi^\ell  \sum_{i=0}^{\infty} \frac{\ell^i}{i!} (\kk  U)^i\\ 
     &= [\zeta^n]\sum_{\ell \geq 1} \frac{2^{\ell-1} \zeta^\ell}{\ell} \Psi^\ell  \sum_{i=0}^{\infty} \frac{\ell^i}{i!} \left(\sum_{p\geq 1} \zeta^p \kk  U_p  \right)^i\\ 
    &= [\zeta^n]\sum_{\ell \geq 1} \frac{2^{\ell-1} \zeta^\ell}{\ell} \Psi^\ell  \sum_{i=0}^{\infty} \frac{\ell^i}{i!}  
       \sum_{p_1, \dots, p_i\geq 1} \prod_{j=1}^i  \zeta^{p_j} \kk  U_{p_j}.
       \label{eq:recurrenceforUbis}
\end{align}
Then, the coefficient of $\zeta^n$ must be extracted with care. It is more convenient to deal separately with the term $\ell=n$, for which the sum over $i$ gives $1$ (i.e. the first term in the expansion of the exponential), and the term $\ell<n$, for which we must take $i>0$. We find that  $U_n(w)$ can be computed by recurrence as 
\be \label{recursionU}
U_n = \frac{2^{n-1}}{n} \Psi^n + \sum_{\ell =1}^{n-1}   \frac{2^{\ell-1} }{\ell} \Psi^\ell  \sum_{i=1}^{n-\ell} \frac{\ell^i}{i!}  \sum_{\begin{matrix} (p_1, \dots, p_i) \in \mathbb Z_{\geq 1}^i \\ p_1+\dots+p_i=n-\ell \end{matrix} } \prod_{j=1}^i \kk  U_{p_j}(w),
\ee 
with the initial condition $U_1=\Psi$, as claimed in \eqref{eq:recurrenceforU}.

\begin{remark} The expression for $U_n$ obtained after solving the recursion \eqref{eq:recurrenceforU} is a sum of terms  of the form 
\be 
\Psi^{c} \prod_{j=1}^r \kk( F_j )   
\label{eq:treeexpansion}
\ee 
for some integers $c\geq 1$ and $r\geq 0$, where the $F_j$ are expressions involving $\Psi$ and $\kk$ (of lower homogeneity  in $\Psi$). To each such expression we can associate a rooted tree with vertices labeled by positive integers. We associate to $\Psi^c$ the tree with a unique vertex labeled by $c$. More generally, to an expression written as  \eqref{eq:treeexpansion} we associate the tree with a root labeled by $c$ and connected to $r$ subtrees corresponding to the expressions $F_j$, we denote those subtrees by $\mathcal T(F_j)$. Thus, the tree corresponding to \eqref{eq:treeexpansion} is the following: 
\begin{center}
\tikzstyle{arbre}=[rectangle,draw,thick]
    \begin{tikzpicture}[xscale=2, thick]
\node[circle, draw] (root) at (0,0) {$c$};
\node[arbre] (f1) at (-1.5,-2) {$\mathcal T(F_1)$};
\node[arbre] (f2) at (-0.5,-2) {$\mathcal T(F_2)$};
\draw (0.5,-2) node {$\ldots$};
\node[arbre] (f3) at (1.5,-2) {$\mathcal T(F_r)$};
\draw (root) -- (f1);
\draw (root) -- (f2);
\draw (root) -- (f3);
\end{tikzpicture}
\end{center}
For example, under these conventions, 
\be 
U_3 =  
\frac{4}{3}\raisebox{-1ex}{
\begin{tikzpicture}[xscale=2, thick] \node[circle, draw] (root) at (0,0) {$3$};\end{tikzpicture}} 
+ 2 \;\;\raisebox{-4ex}{\begin{tikzpicture}[xscale=2, thick] \node[circle, draw] (root) at (0,0) {$2$}; 
\node[circle, draw] (c) at (0,-1){$1$};
\draw (root) -- (c);
\end{tikzpicture}}
+ \frac{1}{2} \;\; \raisebox{-4ex}{\begin{tikzpicture}[xscale=1, thick] \node[circle, draw] (root) at (0,0) {$1$}; 
\node[circle, draw] (c1) at (-1,-1){$1$};
\node[circle, draw] (c2) at (1,-1){$1$};
\draw (root) -- (c1);
\draw (root) -- (c2);
\end{tikzpicture}}
+ \;\;\raisebox{-4ex}{\begin{tikzpicture}[xscale=2, thick] \node[circle, draw] (root) at (0,0) {$1$}; 
\node[circle, draw] (c) at (0,-1){$2$};
\draw (root) -- (c);
\end{tikzpicture}}
+ \;\; \raisebox{-8ex}{\begin{tikzpicture}[xscale=1, thick] \node[circle, draw] (root) at (0,0) {$1$}; 
\node[circle, draw] (c1) at (0,-1){$1$};
\node[circle, draw] (c2) at (0,-2){$1$};
\draw (root) -- (c1);
\draw (c1) -- (c2);
\end{tikzpicture}}\,.
\ee 
The occurrence of a tree expansion is not surprising. Similar tree expansion were originally found in the work \cite{prolhac2010tree}. 
\end{remark}

\begin{remark}
    In terms of the kernel $\bar \kk$ defined in \eqref{defkbar} the functional equation reads
    \be 
    \zeta \Psi e^{\bar \kk U} = \sinh(U) 
    \label{eq:functionalequationkbar}
    \ee 
    This alternatice form of the functional equation is convenient to compute the first coefficients in terms of $\bar\kk$. One gets
    \be \label{U2U3} 
    U_2= \Psi \bar \kk \Psi \quad , \quad U_3 = - \frac{1}{6} \Psi^3 + \frac{1}{2} \Psi  (\bar \kk \Psi)^2 + \Psi \bar \kk (\Psi \bar \kk \Psi)
    \ee 
    Furthermore one can also define
    \be 
    \zeta' = \zeta s_1, \quad  \quad U = \frac{ \Psi V}{s_1} = \hat \Psi V, \quad  \quad s_1= \int_{\I \mathbb{R}} \frac{dw}{2 \I \pi} \Psi(w),
    \ee  
    and the functional equation takes the form
    \be 
    \zeta'  e^{\bar \kk * V} = \frac{\sinh(\hat \Psi V)}{\hat \Psi}, \quad  \quad (\bar \kk * f)(w) = \int \nu(dw) \bar \kk(w,w') f(w') 
    \ee  
    Hence we obtain
    \bea  
    V = \sum_k (\zeta s_1)^k V_k, \quad  \quad V_1=1, \quad  \quad V_2= \bar \kk * 1, 
    \quad  \quad V_2= \bar \kk * (\bar \kk * 1)  + \frac{1}{2} (\bar \kk * 1 )^2 - \frac{1}{6}  \hat \Psi^2 
    \eea  
    One obtains
    \be 
    \frac{s_k}{s_1^k} 
    = \int_{\I \mathbb{R} } \nu(dw)  V_k(w), 
    \quad  \quad \frac{e_k}{s_1^k} =   \int_{\I \mathbb{R} } \nu(dw)  w^2 V_k(w)
    \ee 
\end{remark}
\begin{remark}
 Using the functional equation   \eqref{eq:functionalequationkbar}, we obtain an alternative recursion formula for $U_n(w)$. 
 We find that 
 \begin{multline} 
U_n = \gamma_{(n-1)/2} \mathds{1}_{n \text{ is odd}} \Psi^n \\ + \sum_{r =0}^{\lfloor \frac{ n-1}{2}\rfloor -1}  \gamma_{r} \Psi(w)^{2r+1} \sum_{i=1}^{n-2r +1} \frac{(2r+1)^i}{i!}  \sum_{\begin{matrix} (p_1, \dots, p_i) \in \mathbb Z_{\geq 1}^i \\ p_1+\dots+p_i=n-2r-1 \end{matrix} } \prod_{j=1}^i \kk  U_{p_j}(w),
\end{multline}  
where $\gamma_r= \frac{(-1)^r }{2^{2r}(2r+1)}\binom{2r}{r}$.
\end{remark}

\subsection{First few cumulants}
Using the notation for the measure $\nu(dw)$ in \eqref{eq:defnu}, we have 
\be 
m_k = \int_{\I\R} \nu(dw) \, \frac{U_k(w)}{\Psi(w)}, \;\;\;\;\;\; e_k = \frac{1}{2}\int_{\I\R} \nu(dw) \, w^2 \frac{U_k(w)}{\Psi(w)}.
\ee 
\subsubsection{First cumulant}
Using \eqref{eq:firstcumulantscoeffs} and \eqref{U_k}, we obtain the first cumulant as
\be 
c_1 = e_1 - \frac{1}{24} = - \frac{1}{24} + \frac{1}{2}  \int_{\I\R} \nu(dw) w^2.
\ee

\subsubsection{Second cumulant} Using \eqref{eq:firstcumulantscoeffs} and \eqref{U_k} we have 
\begin{align}
 c_2 &= \frac{2}{s_1} ( e_2-e_1 m_2 ) \\
 &= \frac{1}{\int_{\I \mathbb{R}} \frac{d w}{ 2 \I \pi}  \Psi(w)} \left( 
 \int \nu(dw) w^2 (\Psi(w) + ({\sf k} \Psi)(w) ) - \left(\int \nu(dw') (w')^2\right) \int \nu(dw)  (\Psi(w) + ({\sf k} \Psi)(w) ) \right) \nn \\
 &= \frac{1}{\int_{\I \mathbb{R}} \frac{d w}{ 2 \I \pi}  \Psi(w)} \int \nu(d w)  ( ({\sf k} \Psi)(w) +\Psi(w) ) \left( w^2 -\int_{\I\R} (w')^2\nu(dw') \right)  
\end{align}   
This leads to the expression given above in \eqref{c2exact}, namely
\be \label{eq:secondcumulantbulk}
c_2 = 
\int_{\I\R} \nu(dw_1) \int_{\I\R} \nu(dw_2)  \bar\kk(w_1,w_2)  (w_1^2 - \int_{\I\R } \nu(dw) w^2 ),
\ee 
where we recall that 
\be
\bar \kk(w_1,w_2) = \kk(w_1,w_2)+\delta(w_1-w_2).
\ee
Furthermore using the identity \eqref{c2derivativeL} we can rewrite the second cumulant as 
\be 
c_2 =  \frac{1}{2} \partial_L \left( \int_{\I\R} \nu(dw_1) \int_{\I\R} \nu(dw_2) \bar \kk(w_1,w_2) \right).
\ee 
When $u,v$ go to zero, we find 
\be
c_2(0,0,L) = \frac{1}{2} \int_{\R} \frac{dx}{2 \pi} x^2 \Psi_{0,0}(I x) = 
\frac{1}{2} \int_{\R} \frac{dx}{2 \pi}  4 \pi x \coth(\pi x) e^{- L x^2} 
\label{eq:c200}
\ee 
where $\Psi_{00}(w) = \lim_{u,v\to 0} \Psi(w)$, as claimed in \eqref{eq:c200intro}. Indeed in this limit the dominant term is 
\be 
c_2 = - 
\int_{\I\R} \nu(dw_1) \int_{\I\R} \nu(dw_2)  \delta(w_1-w_2)   \int_{\I\R } \nu(dw) w^2 
= - \frac{\int \frac{dw}{2 \pi}  \Psi(w)^2 }{(\int \frac{dw}{2 \pi} \Psi(w))^2 }
\frac{\int \frac{dw}{2 \pi}  \Psi(w) w^2 }{(\int \frac{dw}{2 \pi} \Psi(w) }
\ee 
All integrals are divergent and dominated by $w \sim u$ (their divergent parts being then easily computed), except 
$\int \frac{dw}{2 \pi}  \Psi(w) w^2$ where one can set $u=v=0$.

\subsubsection{Third cumulant} 
We note that the final results for the cumulants can be expressed in terms of $\nu(dw)$, which is a positive measure for $w$ on the imaginary axis 
and normalized to unity. It can thus be interpreted as a probability measure. It is useful in what follows to 
introduce the following notation, for any function of one or several variables $f(w)$, $f(w_1,w_2)$, etc
\be 
\langle f(w) \rangle = \int_{\I \mathbb{R} } \nu(dw) f(w),  \quad  \quad 
\langle f(w_1,w_2) \rangle = \int_{\I \mathbb{R} } \nu(dw_1) \int_{\I \mathbb{R} } \nu(dw_2)  f(w_1,w_2)  
\ee
for the expectation value w.r.t. that measure (the variables $w_i$ being considered as i.i.d. independent) 

Then one has
\bea  
&& c_3 = \frac{6}{s_1^2} (e_3- 2 e_2 m_2 + 2 e_1 m_2^2 - e_1 m_3) \\
&& = \frac{3}{s_1^2}  \left(   \left\langle w^2 \frac{U_3}{\Psi} \right\rangle - \left\langle w^2 \right\rangle \left\langle \frac{U_3}{\Psi} \right\rangle  
+ 2\left(\left\langle w^2 \right\rangle \left\langle \frac{U_2}{\Psi} \right\rangle^2 - 
\left\langle w^2 \frac{U_2}{\Psi} \right\rangle \left\langle \frac{U_2}{\Psi} \right\rangle \right) \right) 
\eea 
One has, from \eqref{U2U3} 
\be
\frac{U_2}{\Psi}  =  \bar \kk \Psi,   \quad  \quad \frac{U_3}{\Psi}  =  - \frac{1}{6} \Psi^2 + \frac{1}{2}   (\bar \kk \Psi)^2 + \bar \kk (\Psi \bar \kk \Psi) 
\ee

Using the above notation and recalling that $s_1= \int_{\I \mathbb{R}} \frac{dw}{2 \I \pi} \Psi(w)$ one has
\be 
\frac{1}{s_1} \left\langle \frac{U_2}{\Psi} \right\rangle = \langle \bar {\sf k}(w_1,w_2) \rangle,  
\quad  \quad \frac{1}{s_1} \left\langle w^2 \frac{U_2}{\Psi} \right\rangle = \langle w_1^2 \bar {\sf k}(w_1,w_2) \rangle 
\ee 
Note that using \eqref{c2derivativeL} together with the symmetry in $w_1,w_2$
\be 
\partial_L \langle \bar {\sf k}(w_1,w_2) \rangle = 2 \langle (w_1^2 - \langle w^2 \rangle )\bar {\sf k}(w_1,w_2) \rangle
\ee 
and 
\be 
\frac{1}{s_1^2} \left\langle \frac{U_3}{\Psi} \right\rangle = - \frac{1}{6} \langle \delta(w_1-w_2) \delta(w_2-w_3)  \rangle  + \frac{1}{2} 
\langle \bar k(w_1,w_2) \bar \kk(w_1,w_3) \rangle + 
\langle \bar k(w_1,w_2) \bar \kk(w_2,w_3) \rangle 
\ee 
Using the identity \eqref{c2derivativeL} 
we finally obtain
\bea 
&& c_3 =   \partial_L 
\left( - \frac{1}{6} \langle \delta(w_1-w_2) \delta(w_2-w_3)  \rangle  \right) - \frac{3}{2} \partial_L  \langle \bar {\sf k}(w_1,w_2) \rangle^2 \\
&& + \frac{1}{2} 
\langle (w_1^2 - \langle w^2 \rangle )  \bar \kk(w_1,w_2) \bar \kk(w_1,w_3) \rangle + 
\langle (w_1^2 - \langle w^2 \rangle ) \bar \kk(w_1,w_2) \bar \kk(w_2,w_3) \rangle 
\eea 
The last two terms can be combined into a derivative w.r.t. $L$, using again \eqref{c2derivativeL} 
together with the permutation symmetry in $w_1,w_2,w_3$, leading to the final result
\be 
 c_3 = \partial_L \left( \frac{1}{2}  \langle \bar \kk(w_1,w_2) \bar \kk(w_1,w_3) \rangle 
- \frac{3}{2}  \langle \bar {\sf k}(w_1,w_2) \rangle^2 
- \frac{1}{6} \langle \delta(w_1-w_2) \delta(w_2-w_3)  \rangle \right) 
\ee

\subsection{Extension for $u\leq 0$ or $v\leq 0$}
\label{sec:analytic}
All previous formulas are valid for $u,v>0$. When $u\to 0$ with $v>0$ or $v\to 0$ with  $u>0$, there is actually no singularity. We obtain, for instance in the case $u = 0, v>0$, using the duplication formula of the Gamma function, 
$$ \Psi(w) = 4\cos(\pi w ) \Gamma(v+w)\Gamma(v-w) e^{w^2L}, $$
which has no singularities on the vertical line $\I\R$ nor on $\delta+\I\R$ for small $\delta>0$. 
Thus, all the formulas above can be used directly in that case. In all other cases, i.e. $u<0$ or $v<0$ or $u=v=0$, we need to perform analytic continuations of the above formulas in $u,v$.

\subsubsection{First cumulant} In order to perform the analytic continuation of $c_1(u,v,L)$, we write,  using \eqref{eq:firstorderintro},  
\be 
c_1(u,v,L) = \frac{-1}{24} + \frac{1}{2} \frac{\partial_L \widetilde{\mathcal{Z}}_{u,v}(L)}{\widetilde{\mathcal{Z}}_{u,v}(L)}
\label{eq:c1Ztilde}
\ee 
where $\widetilde{\mathcal{Z}}_{u,v}(L)=\frac{1}{\Gamma(u+v)} \int_{\I\R } \frac{dw}{2\I\pi} \Psi(w)$. The division by $\Gamma(u+v)$ is convenient to handle the case where both $u$ and $v$ are negative. The function $\Psi$ in \ref{defPsi} has poles at $- u-i, u+i, -v-i$ and  $v+i$  for all $i\in \Z_{\geq 0}$. In the phase $u,v>0$, the singularities at $u+i$ and $v+i$ lie to the right of the integration contour $\I\R$, while singularities at $-u-i$ and $-v-i$ lie to the left of $\I\R$. When $u$ or $v$ is negative, some of these singularities go to the other side of $\I\R$.  The analytic continuation of $\widetilde{\mathcal{Z}}_{u,v}(L)$ in the variable $u$ or $v$ (or both) is given by \cite[Section 4.5]{barraquand2024integral}
\begin{multline}
    \widetilde{\mathcal{Z}}_{u,v}(L) = \frac{1}{\Gamma(u+v)} \int_{\I\R } \frac{dw}{2\I\pi} \Psi(w) + \frac{1}{\Gamma(u+v)} \sum_{i=0}^{\lfloor -u\rfloor} \Res{z=-u-i}\left\lbrace \Psi(z) \right\rbrace - \Res{z=u+i}\left\lbrace \Psi(z) \right\rbrace \\ + \frac{1}{\Gamma(u+v)}\sum_{i=0}^{\lfloor -v\rfloor} \Res{z=-v-i}\left\lbrace \Psi(z) \right\rbrace - \Res{z=v+i}\left\lbrace \Psi(z) \right\rbrace. 
    \label{eq:analyticcontinuationZtilde}
\end{multline}
The formula above is correct when $u,v$ are not negative integers, and has to be modified slightly for negative integers. Figure \ref{fig:papillon} explains how the residues above arise in the analytic continuation. 
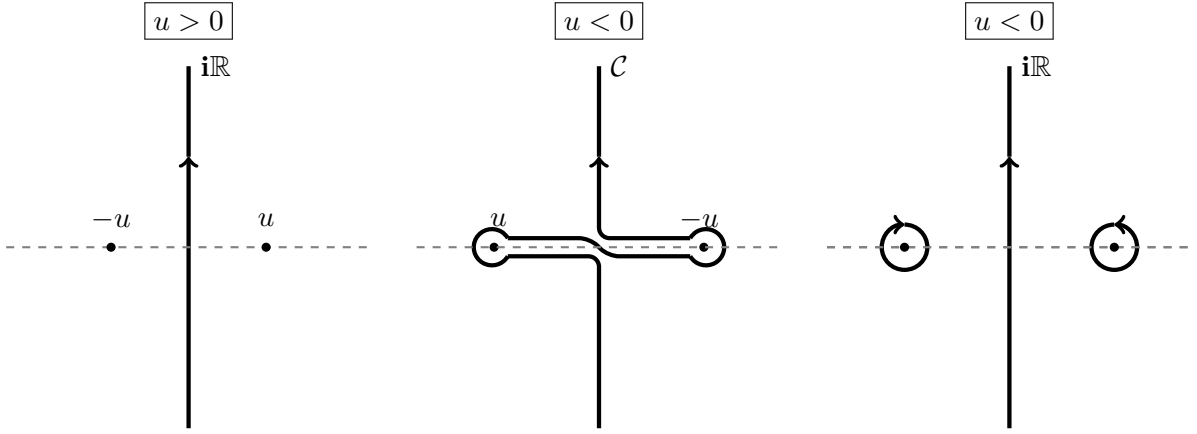
\begin{figure}
    \centering
        \begin{tikzpicture}[scale=0.6]
\draw[ultra thick, ->] (0,-4) -- (0,2);
\draw[ultra thick] (0,2) -- (0,4) node[anchor = west]{$\I\mathbb R$};
\draw[gray, thick, dashed] (-4,0) -- (4,0);
\draw (0,5) node{$\boxed{u>0}$};
\fill (1.7,0) circle(0.1);
\fill (-1.7,0) circle(0.1);
\draw (1.7,0.6) node{$u$};
\draw (-1.7,0.6) node{$-u$};
\begin{scope}[xshift=9cm]
\draw (0,5) node{$\boxed{u<0}$};
 \draw[ultra thick, rounded corners] (0,-4)  -- (0,-0.2) -- (-2,-0.2);
 \draw[ultra thick, rounded corners] (-2,0.2) -- (-0.2,0.2) -- (0.2,-0.2) -- (2,-0.2);
\draw[ultra thick, ->, rounded corners] (2,0.2)  -- (0,0.2) -- (0,2);
\draw[ultra thick] (0,2) -- (0,4) node[anchor = west]{$\mathcal C$};
\draw[ultra thick] (-2,-0.2) arc(330:30:0.4);
\draw[ultra thick] (2,-0.2) arc(-150:150:0.4);
\fill (2.3,0) circle(0.1);
\fill (-2.3,0) circle(0.1);
\draw (2.2,0.6) node{$-u$};
\draw (-2.2,0.6) node{$u$};
\draw[gray, thick, dashed] (-4,0) -- (4,0);   
\end{scope}
\begin{scope}[xshift=18cm]
\draw (0,5) node{$\boxed{u<0}$};
 \draw[ultra thick, ->] (0,-4) -- (0,2);
\draw[ultra thick] (0,2) -- (0,4) node[anchor = west]{$\I\mathbb R$};
\draw[gray, thick, dashed] (-4,0) -- (4,0);
\draw[ultra thick, ->] (-2.3,0.5) arc(90:-270:0.5);
\draw[ultra thick, <-] (2.3,0.5) arc(90:-270:0.5);
\draw[gray, thick, dashed] (-4,0) -- (4,0);   
\fill (2.3,0) circle(0.1);
\fill (-2.3,0) circle(0.1);
\end{scope}
    \end{tikzpicture}
    \caption{We consider a function $F$ defined for $u>0$ by an integral of the form $F(u) =\int_{\I\mathbb R} f(u,w)\frac{dw}{2\I\pi}$
    where $w\mapsto f(u,w)$ is a meromorphic function with poles at $w=u$ and $w=-u$, and $u\mapsto f(u,w)$ is also holomorphic except at $w=\pm u$.  
    Then, its analytic continuation to $u<0$ is $\int_{\mathcal C} f(u,w)\frac{dw}{2\I\pi}$ where $\mathcal C$ is the contour shown above. This contour is such that $u$ still lies on the right of the contour (as is the case when $u>0$), and the poles at $-u$ still lies to the left of the contour (as is the case  when $u>0$). Since the poles do not cross the contour, the formula remains analytic in $u$.  
    The contour $\mathcal C$ can be then deformed to the union of a vertical line and small circles around simple poles whose contribution can be computed by the residue theorem.}
    \label{fig:papillon}
\end{figure}
The analytic continuation of $\partial_L \widetilde{\mathcal{Z}}_{u,v}(L) $ is the same as \eqref{eq:analyticcontinuationZtilde} after replacing $\Psi(w)$ by $w^2\Psi(w)$. Computing the residues explicitly, we obtain (for $u,v\not\in\mathbb Z_{\leq 0}$)
\begin{multline}
    \widetilde{\mathcal{Z}}_{u,v}(L) = \frac{1}{\Gamma(u+v)} \int_{\I\R } \frac{dw}{2\I\pi} \Psi(w) +  \sum_{i=0}^{\lfloor -u\rfloor} \frac{2(-1)^i}{i!} \frac{\Gamma(v-i-u)\Gamma(2u+i)}{\Gamma(2(u+i))\Gamma(-2(u+i))}e^{L(u+i)^2}(u+v)_i  \\ + \sum_{i=0}^{\lfloor -v\rfloor} \frac{2(-1)^i}{i!} \frac{\Gamma(u-i-v)\Gamma(2v+i)}{\Gamma(2(v+i))\Gamma(-2(v+i))}e^{L(v+i)^2}(u+v)_i.
\end{multline}
where $(a)_k=a(a+1)\dots (a+k-1)$. For example, if $-1<u<0$ and $v>0$, we have 
\begin{equation}
     \widetilde{\mathcal{Z}}_{u,v}(L) = \frac{1}{\Gamma(u+v)} \int_{\I\R } \frac{dw}{2\I\pi} \Psi(w) + 2e^{Lu^2} \frac{\Gamma(-u+v)}{\Gamma(-2u)}.
     \label{eq:analyticZtilde}
\end{equation}
The expression for $c_1(u,v;L)$ is finally obtained by plugging the analytic continuation of $\widetilde{\mathcal{Z}}_{u,v}(L)$ in \eqref{eq:c1Ztilde}. In general, the expression is not particularly nice. However, when $u+v=0$, a drastic simplification occurs. We find that  $\widetilde{\mathcal{Z}}_{u,v}(L)  = 2 e^{L u^2}$,  
so that 
\be
c_1(u,v,L) = \frac{-1}{24}+\frac{1}{2}u^2. 
\label{eq:c1equilibrium}
\ee In this special case it turns out that the stationary height is a Brownian with drift.

More generally at large $L$ we find that 
\be 
\lim_{L \to +\infty} c_1(u,v,L) = \frac{-1}{24}+\frac{1}{2} \max( u^2\mathds{1}_{u<0} , v^2\mathds{1}_{v<0} ) 
\ee 
The second term is non-zero for either $u$ or $v$ negative, and corresponds to the binding energy per unit length
of the polymer to one of the edges of the interval.

\subsubsection{Second cumulant}
The analytic continuation of $c_2(u,v,L)$ can be performed using the same principle, i.e. starting from \eqref{eq:secondcumintro} and analytically continuing as in Figure \ref{fig:papillon}. The formulas become more complicated and we will only explain how to compute $c_2(u,v,L)$ when $u+v=0$.  According to \eqref{eq:secondcumulantbulk}, we have two terms to consider. When $u,v>0$, 
\begin{align}
    I_1&= \frac{1}{\widetilde{\mathcal{Z}}_{u,v}(L)^2}\int_{\I\R} \frac{dw_1}{2\I\pi}\int_{\I\R} \frac{dw_1}{2\I\pi} \frac{1}{\Gamma(u+v)^2} \Psi(w_1)\Psi(w_2) \kk(w_1,w_2) \left( w_1^2 - f(u,v)\right),\label{eq:I1}\\
    I_2&= \frac{1}{\widetilde{\mathcal{Z}}_{u,v}(L)^2}\int_{\I\R} \frac{dw}{2\I\pi}\frac{1}{\Gamma(u+v)^2} \Psi(w)^2\left( w_1^2 - f(u,v)\right),\label{eq:I2}
\end{align}
where $f(u,v) = \frac{\int_{\I\R} \frac{dw}{2\I\pi}\frac{1}{\Gamma(u+v)} w^2\Psi(w)}{\widetilde{\mathcal{Z}}_{u,v}(L)}$. We will study the analytic continuation of $I_1$ and $I_2$ when $u<0$, and then let $v$ go to $-u$. In order to perform this limit, we need some information on the function $f(u,v)$. We already now that $f(u,-u)=u^2$. Moreover, we will need the approximation,  for $v$ close to $-u$ and $u<0$, 
\be 
f(u,v) = u^2 + c(u+v)+ o(u+v), \text{ with }\,c= \frac{1}{2 e^{Lu^2}} \int_{\I\R} \frac{dw}{2\i\pi} \Psi(w)\Big\vert_{v=-u}(w^2-u^2).
\ee 
which is obtained from \eqref{eq:analyticZtilde}. Let us now explain how to perform the analytic continuation. In both cases, the principle is similar to \eqref{eq:analyticcontinuationZtilde}, we obtain an integral and  residues. When $u<0$ and we let $v$ go to $-u$, both integrals vanish because of the factors $\frac{1}{\Gamma(u+v)^2}$ in \eqref{eq:I1} and \eqref{eq:I2}. The residues can be computed, and after letting $v$ go to $-u$, we obtain that the residues of $I_1$ vanish, and the contribution of resiudes in $I_2$ is $-(c+u)(2e^{Lu^2})^2$. Taking into account the division by $\widetilde{\mathcal{Z}}_{u,v}(L)^2$ in \eqref{eq:I1} and \eqref{eq:I2}, we finally obtain that for $u<0$
\be
c_2(u,-u,L) = -u-\frac{1}{2 e^{Lu^2}} \int_{\I\R} \frac{dw}{2\I\pi} \Psi(w)\Big\vert_{v=-u}(w^2-u^2). 
\ee 
Moreover, we may let $u$ go to zero and recover \eqref{eq:c200}. 

\section{Large $L$ limit}

We now consider the solution of the equations \eqref{systemEmu1},\eqref{systemEmu2},
\eqref{eq:functionalequation}, \eqref{defPsi} \eqref{defoperator}, \eqref{defkernel}
in the limit $L \gg 1$. In that limit it was found in \cite{barraquand2022steady}
that the natural scaling for the boundary parameters is to set
\be 
u = \frac{\widetilde u}{\sqrt{L}},  \quad  \quad v=\frac{\widetilde v}{\sqrt{L} }.
\ee 
Here we 
will first examine the equations for $L \gg 1$.

\subsection{General formula}
For the moment we assume that $u,v>0$. From the definition of the function $\Psi_L(w)$ (making here explicit the dependence in $L$) given 
in \eqref{defPsi} we see that it takes the following scaling form in the large $L$ limit
\be 
\Psi\left(w= \frac{\widetilde w}{\sqrt{L}}\right) = L \widetilde \Psi(\widetilde w),  \quad  \quad 
\widetilde \Psi(\widetilde w) = \frac{-4 \widetilde w^2}{(\widetilde u^2 - \widetilde w^2)(\widetilde v^2 - \widetilde w^2)} e^{\widetilde w^2} 
\ee 

The coefficients $U_k(w)$ also simplify drastically in the scaling limit. Indeed the kernel
which arise in the convolutions when expressed in the scaled variables
take the form
\be 
\kk\left(\frac{\widetilde w_1}{\sqrt L},\frac{\widetilde w_2}{\sqrt L}\right) = -2 \left(  \psi\left(1 + \frac{\widetilde w_1 - \widetilde w_2}{\sqrt{L}} \right) 
+ \psi\left(1 + \frac{\widetilde w_2 - \widetilde w_1}{\sqrt{L}} \right)\right) = \kk(0,0) + O\left(\frac{1}{\sqrt{L}}\right) 
\ee 
where $\kk(0,0) = - 4 \gamma_E$. 
Hence from \eqref{U_k} we obtain the first three coefficients as
\begin{align}
U_1 &= \Psi = L \widetilde \Psi(\widetilde w),\\
U_2 &= \Psi^2 + \Psi ({\sf k} \Psi) = L^2 \left( \widetilde \Psi(\widetilde w)^2 + \frac{1}{\sqrt{L}} {\sf k}(0,0) \widetilde \Psi(\widetilde w) \int_{\I\R}\frac{d\widetilde w'}{2\I\pi} \widetilde \Psi(\widetilde w') + O\left(\frac{1}{L}\right) \right), \nn \\
 U_3 &= \frac{4}{3} \Psi^3 + 2 \Psi^2 ({\sf k} \Psi) + \frac{1}{2}  \Psi  ({\sf k} \Psi)^2 + \Psi ( {\sf k} \Psi^2 ) 
+ {\sf k}( \Psi ({\sf k} \Psi)) = L^3 \left( \frac{4}{3} \widetilde\Psi(\widetilde w)^3 + O\left(\frac{1}{\sqrt{L}}  \right) \right).\nn 
\end{align} 
Hence it is clear that one can drop all convolutions in the leading term, as was also noted in  \cite[Eq (IV.30)]{LazarescuThesis}. More
generally one has 
\be 
U_k\left(w= \frac{\widetilde w}{\sqrt L} \right) = L^k \left( \frac{2^{k-1}}{k}\left( \widetilde\Psi(\widetilde w)\right)^k + O(1/\sqrt L)\right).
\ee 
Since one can neglect all the contributions from the kernel to the function $U(w)$, the functional equation for $U(w)$ then becomes local in $w$ (in the sense that it can be solved for each $w$ independently) in the large $L$ limit, and the system of equations \eqref{systemEmu1},\eqref{systemEmu2},
\eqref{eq:functionalequation} becomes, upon rescaling
\begin{align}  \label{eqmuE}  
s &\simeq L^{-1/2}  \int_{\I \mathbb{R}} \frac{d \widetilde w}{ 2 \I \pi} \widetilde U(\widetilde w),\\ 
 E(s)+\frac{s}{24} &\simeq    \frac{L^{-3/2}}{2}  \int_{\I \mathbb{R}} \frac{d \widetilde w}{ 2 \I \pi} \widetilde w^2 \widetilde U(\widetilde w) , 
\end{align}
where 
\be 
\widetilde U(\widetilde w) = - \frac{1}{2} \log( 1 -\widetilde \zeta \widetilde \Psi(\widetilde w) )
\label{eq:functionalequationtilde}
\ee 
where $\widetilde \zeta \simeq 2 \zeta L$ must be eliminated. From this we see that there exists a limiting function 
\be 
\widetilde E(\widetilde s) = \lim_{L\to\infty} L^{3/2} \left( E(L^{-1/2}\widetilde s ) + \frac{L^{-1/2}\widetilde s}{24}\right) 
\label{eq:deftildeE}
\ee 
which admits the following parametric representation (where $\widetilde \zeta$ should be eliminated)
\be \label{parametricrescaled}
\widetilde s = \frac{1}{2} \sum_{k=1}^{+\infty} \frac{1}{k}  \widetilde\zeta^k \phi_k, \;\;\; \widetilde E(\widetilde s) = -\frac{1}{4} \sum_{k=1}^{\infty} \frac{1}{k}  \widetilde\zeta^k \psi_k
\ee 
obtained by expanding the logarithm in \eqref{eq:functionalequationtilde}, where 
\be 
\phi_k=\int_{\I\R} \frac{d\widetilde w}{2\I\pi} \widetilde \Psi(\widetilde w)^k, \;\;\; \psi_k = -\int_{\I\R} \frac{d\widetilde w}{2\I\pi} \widetilde w^2\widetilde \Psi(\widetilde w)^k.
\ee
Note that the coefficients $\phi_k$ and $\psi_k$ can also be written as real integrals 
\be \label{defphipsi} 
\phi_k=\int_{\R} \frac{dy}{2\pi}  \phi(y)^k, \;\;\; \psi_k=\int_{\R} \frac{dy}{2\pi}  y^2 \phi(y)^k
\ee 
where 
\be 
\label{defphi} 
\phi(y) = \frac{4 y^2 e^{-y^2} }{(\widetilde u^2+y^2)(\widetilde v^2+y^2)}.
\ee 
The coefficients $\phi_k$ and $\psi_k$ are bounded by $C^k$ for some $C>0$ so that the power series in $\tilde\zeta $ in \eqref{parametricrescaled} have a finite radius of convergence. 

We can also rewrite the parametric representation of $E(\widetilde s)$
as 
\be \label{parametricrescalednew}
 \widetilde s = - \frac{1}{2} \int_{\R} \frac{dy}{2\pi} \log(1 - \widetilde\zeta \phi(y)), 
 \quad  \quad 
 \widetilde E(\widetilde s) = \frac{1}{4} \int_{\R} \frac{dy}{2\pi}  y^2 \log(1 - \widetilde\zeta \phi(y)) 
\ee
where one must eliminate $\tilde \zeta$. This representation is valid
in some interval around $\tilde s=0$. 
\begin{remark}
In light of Remark \ref{rem:universalityPierre} and \ref{rem:universalityGuillaume}, the function $E(\tilde s)$ should be the large time cumulant generating function of the KPZ fixed point, and can be recovered by taking a large time large scale limit of cumulant generating function of other models in the KPZ class. In particular, we have checked that our formulas \eqref{parametricrescalednew} agrees with the corresponding formulas obtained from open TASEP, see \cite[Eq. (12)]{prolhac2024kpz} (the result was extracted in \cite{godreau2021riemann} from formulas in \cite{lazarescu2011exact}). In order to match precisely the formula \eqref{parametricrescalednew} and \cite[Eq. (12)]{prolhac2024kpz}, we set $\tilde u=\sigma_a$, $\tilde v=\sigma_b$ and $\frac{-1}{\tilde\zeta} = \frac{4 e^{-\nu}}{(1+\sigma_a^2)(1+\sigma_b^2)} $.
\end{remark}
\begin{remark}
Using that $\partial_L \Psi(w) = w^2 \Psi(w)$ 
one finds that function $\widetilde \Psi(\widetilde w)$ satisfies 
    \be 
  \left(   1   + \frac{1}{2} \left( \widetilde u \partial_{\widetilde u} + \widetilde v \partial_{\widetilde v} + 
    \widetilde w \partial_{\widetilde w} \right) \right)  \widetilde \Psi(\widetilde w) 
    = \widetilde w^2 \widetilde \Psi(\widetilde w)  
    \ee 
    Multiplying by $\widetilde \Psi(\widetilde w)^{k-1}$, integrating over $\widetilde w$ and performing an integration by part
    it implies the relation 
    \bea \label{relationphipsi} 
    \psi_k = - \left(  1 - \frac{1}{2 k}   + \frac{1}{2 k} \left( \widetilde u \partial_{\widetilde u} + \widetilde v \partial_{\widetilde v} \right) \right) \phi_k.
    \eea 
\end{remark}
\subsection{Cumulants} 
\label{sec:cumulantslargeL}
Expanding \eqref{eq:deftildeE} in power series in $\widetilde s$, we find that large time cumulants $c_k(u,v,L)$ scale as 
\begin{align}  \label{cklargetime2} 
c_1(u,v,L) &\sim_{L\to\infty} \frac{-1}{24} + \frac{1}{L} \widetilde c_1(\widetilde u,\widetilde v)\\    c_k(u,v,L) &\sim_{L\to\infty} L^{\frac{k-3}{2}} \widetilde c_k(\widetilde u, \widetilde v) \text{ for }k\geq 2.
\end{align}
where the  $\widetilde c_k$ can be computed by extracting coefficients of 
\be 
\widetilde E(\widetilde s) = \sum_{k=1}^{+\infty} \frac{\widetilde s^k}{k!} \widetilde c_k.
\ee 
The formula are simplest in terms of the variables $\hat \psi_n= \psi_n/\phi_1^n$ and 
$\hat \phi_n= \phi_n/\phi_1^n$, one obtains 
\begin{align}
\widetilde c_1 &= - \frac{\hat \psi _1}{2 } 
=  -\frac{\psi _1}{2 \phi_1 }, \\
\widetilde c_2 &=  \left(\hat \psi _1 \hat \phi _2- \hat \psi
   _2\right) = 
\frac{1}{\phi_1^3}  \left(\psi _1 \phi _2-\psi_2 \phi_1\right) \\
 \widetilde c_3 &= 6 \left(-2 \hat \psi
   _3-4 \hat \psi _1 \hat \phi _2^2+4
   \hat \psi _2 \hat \phi _2+2 \hat \psi _1
   \hat \phi
   _3\right) =
\frac{2}{\phi_1^5}  \left(-2 \psi
   _3 \phi_1^2 -3 \psi _1 \phi _2^2+3
   \psi _2 \phi _2 \phi_1 +2 \psi _1
   \phi
   _3 \phi_1 \right) \nn 
\end{align}

For the first moment it gives (for $\tilde u>0,\tilde v>0$) 
\be 
\widetilde c_1(\tilde u,\tilde v)= - \frac{1}{2}  \frac{\int_{\R}\frac{y^4}{(\widetilde u^2+y^2)(\widetilde v^2+y^2)}e^{-y^2}dy }{\int_{\R}\frac{y^2}{(\widetilde u^2+y^2)(\widetilde v^2+y^2)}e^{-y^2}dy}.
\ee  
Note that it vanishes at the point $(\tilde u,\tilde v)=0$, indeed for $0 < \tilde u \ll 1$ 
and $0< \tilde v \ll 1$ one has 
$\widetilde c_1(\tilde u,\tilde v) \simeq - \frac{\tilde u + \tilde v}{2 \sqrt{\pi} } $.
This agrees with the result \eqref{eq:c1equilibrium} which in the 
large $L$ limit implies 
$\tilde c_1(\tilde u, -\tilde u) = \frac{\tilde u^2}{2}$, from \eqref{cklargetime2}.
Some more explicit expressions and limiting behaviors for the the first and higher cumulants 
are given in Appendix \ref{app:formula}.

In some limiting cases, we can make contact with previous results. The next section discusses the maximal current phase, that is  $\widetilde u, \widetilde v=+\infty$. We discuss in Appendix \ref{app:formula} another case, the line between the maximal current phase and the high density phase, that is $\tilde u\geq 0, \tilde v=0$.

\subsection{Maximal current phase $\widetilde u, \widetilde v=+\infty$}

In this limit one has from \eqref{defphi}  
\be \phi(y) \simeq 4 y^2 e^{-y^2}/(\tilde u^2 \tilde v^2).
\ee
Using \eqref{parametricrescaled} and \eqref{defphipsi} and redefining $\hat \zeta=  \tilde \zeta/(\tilde u^2 \tilde v^2)$
we obtain the parametric representation (where $\hat \zeta$ should be eliminated): 
\be 
\widetilde s = \frac{1}{4 \sqrt{\pi}}\sum_{k=1}^{+\infty} \hat\zeta^k 
 \frac{(2 k)!}{k! k^{\frac 3 2 +k}} , \;\;\; \widetilde E(\widetilde s) = -\frac{1}{8 \sqrt{\pi}} \sum_{k=1}^{\infty} \hat \zeta^k  \frac{(2 k)!}{k! k^{\frac 5 2 +k}} \left(\frac{1}{2} + k\right).
\ee 
This can be made in correspondence with the result for the maximum current phase of the
ASEP in \cite[Eq. (IV.61-62)]{LazarescuThesis}
as follows (we identify our variable $\hat \zeta$ with the variable  $B$ in \cite{LazarescuThesis}):  
\begin{align} \label{corresp}
2 \widetilde s &= - \mu L^{1/2},  \\
 \widetilde E(\widetilde s) + \frac{\widetilde s}{2} &= \left( \frac{E}{1-q} - \frac{1}{4} \mu\right) L^{3/2} .
\end{align} 
Hence up to a scale factor and a shift in the first cumulants, the generating functions of the
cumulants are the same in this limit, which is expected by universality -- see Remark \ref{rem:universalityPierre} and Remark \ref{rem:universalityGuillaume}. Let us also recall that
in that limit,  the stationary height is the sum of a Brownian motion and a Brownian excursion \cite{derrida2004asymmetric, bryc2019limit}. We can give the corresponding cumulants
\be \label{cumulantsmaximal}
\tilde c_1= -\frac{3}{4} , \quad  \tilde c_2= \frac{3
   \sqrt{\frac{\pi }{2}}}{8}
   , \quad   \tilde c_3= 6
   \left(\frac{40}{81
   \sqrt{3}}-\frac{9}{32}\right)
   \pi 
  , \quad    \tilde c_4= \frac{5}{192}
   \left(567+486 \sqrt{2}-512
   \sqrt{6}\right) \pi^{3/2} 
\ee 
Some correction terms for large $\tilde u, \tilde v$ are given in Appendix \ref{app:formula}.

\section{Cumulants from the replica Bethe ansatz method of Brunet and Derrida}
\label{sec:BrunetDerrida}
Here we extend to the case of the interval the Bethe ansatz replica method of Brunet and Derrida \cite{brunet2000ground,brunet2000probability},
originally developed for the periodic boundary conditions. The aim is to compute the cumulants $c_k$
from 
\be  \label{E0cum} 
- E_0(n,L) = \lim_{t \to +\infty} \frac{1}{t} \log  \mathbb{E}[Z(x,t)^n] = \sum_{k \geq 1} \frac{c_k(u,v,L)}{k!} n^k.
\ee 
The quantity $E_0(n,L)$ can be computed for positive integer $n$ using the replica Bethe ansatz. 
It is equal to the ground state energy of a delta Bose gas with $n$ particles, on the interval (see below). 
We will guess an analytic continuation in $n$ to obtain the $c_k$. This program was
carried out succesfully in 
\cite{brunet2000ground,brunet2000probability}, in the case of periodic boundary conditions. We will remain close to the notations used in these works.

\subsection{Replica Bethe ansatz}

Let us define the $n$-point moments 
\be 
C_n(x_1,\dots,x_n;t)= \mathbb{E}[ Z(x_1,t) \dots Z(x_n,t) ] 
\ee 
From \eqref{eq:SHE} they obey the evolution equation 
\be
\partial_t C_n = - H_n C_n \quad , \quad H_n = - \frac{1}{2} \sum_{\alpha=1}^n \partial_\alpha^2 - \gamma \sum_{1 \leq \alpha<\beta \leq n} \delta(x_\alpha - x_\beta) 
\label{eq:defHn}
\ee 
with the boundary conditions inherited from \eqref{eq:SHEboundaryconditions}. The noise strength parameter $\gamma$ is 
set to unity in this paper, but kept for convenience here. 
The spectrum of $H_n$ on the interval is solvable by Bethe ansatz. For general references see \cite{gaudin} and
\cite{diejen2017orthogonality, diejen2018completeness}, for the hard box see \cite{oelkers2006bethe}, and the numerical study in 
\cite{chinesehardbox}. In the context of the KPZ equation see \cite[Eq. (19-24)]{denardis2020delta}. 
The symmetric eigenstates $\Psi_{\boldsymbol\lambda}(\mathbf x)$ of $H_n$, with $\mathbf x=(x_1,\dots,x_n)$,
i.e. the solutions of $(H_n \Psi_{\boldsymbol\lambda})(\mathbf x) = E_{\boldsymbol\lambda} \Psi_{\boldsymbol\lambda}(\mathbf x)$,
have the Bethe form parametrized by a set of rapidities 
$\boldsymbol \lambda= (\lambda_1 , \cdots , \lambda_n) \in \mathbb{C}^n$,
with eigenenergy 
\be 
E_{\boldsymbol\lambda} = \frac{1}{2} \sum_{\alpha=1}^n \lambda_\alpha^2. 
\ee 
Let us give the unormalized wavefunctions $\Psi_{\boldsymbol\lambda}(\mathbf x)$ for completeness. They are symmetric in the variables $\lbrace x_1, \cdots ,x_n\rbrace$. In the sector $\mathbb W_n$ defined by inequalities $0 \leqslant x_1 \leqslant  \dots   \leqslant x_n \leqslant L $, wavefunctions are defined by 
\begin{align} \label{EqT:wave}
\Psi_{\boldsymbol\lambda}(\mathbf x) &= \frac{1}{(2 \I)^{n}} \sum_{\sigma \in S_n} \sum_{\boldsymbol\epsilon \in \lbrace -1,1\rbrace^n } \epsilon_1 \epsilon_2\dots \epsilon_n {\cal A}[\epsilon_1 \lambda_{\sigma_1},\epsilon_2 \lambda_{\sigma_2}, \dots \epsilon_n \lambda_{\sigma_n}]  \exp\left( \I \sum_{\alpha=1}^n \epsilon_\alpha \lambda_{\sigma_\alpha} x_\alpha   \right) \nn  \\
 {\cal A}[\lambda_1,..,\lambda_n] &= \prod_{1\leqslant \alpha < \beta \leqslant n } 
\left(1+ \frac{\I \gamma}{\lambda_\beta - \lambda_\alpha}\right)\left(1+ \frac{\I \gamma}{\lambda_\beta + \lambda_\alpha}\right)
\prod_{\alpha=1}^n \left(1 + \I \frac{\lambda_\alpha}{A}\right).
\end{align}
where the sum is over permutations $\sigma$ of $S_n$ and signatures $\boldsymbol{\epsilon}=(\epsilon_1,\dots,\epsilon_n) $ and we denote the boundary parameters as $A=u-1/2$ and $B=v-1/2$.
Such wavefunctions are continuous on $(\mathbb{R}_{\geq 0})^n$ and satisfy the  boundary conditions at $x=0$ consistent with \eqref{eq:SHEboundaryconditions},
as well as the conditions at the boundary of the sector $\mathbb W_n$, which arise from the $\delta(x_i-x_j)$ interaction, namely:
\be
 (\partial_{x_{i+1}} -  \partial_{x_{i}} + \gamma) \Psi_\mu(x_1 \leqslant \cdots\leqslant  x_n)|_{x_{i+1}=x_i^+}=0.
\ee 
Then, enforcing the boundary condition at $x=L$ is equivalent to impose that the rapidities satisfy the following Bethe equations
\bea \label{EqT:BetheEq1}
e^{2 \I \lambda_\alpha L} = \frac{B - \I \lambda_\alpha}{B + \I \lambda_\alpha} 
\frac{A - \I \lambda_\alpha}{A + \I \lambda_\alpha} \, \prod_{1 \leqslant \beta \neq \alpha \leqslant n} \frac{\lambda_\alpha - \lambda_\beta - \I \gamma }{\lambda_\alpha - \lambda_\beta + \I \gamma} 
\frac{\lambda_\alpha + \lambda_\beta  - \I \gamma }{\lambda_\alpha  + \lambda_\beta +  \I \gamma },  \,  
 \quad \forall \alpha \in \llbracket 1,n\rrbracket \, .
\eea 
It is obvious from \eqref{EqT:wave} that changing any $\lambda_j$ to $-\lambda_j$ (for any given $j$) just changes the wavefunction  $\Psi_{\boldsymbol\lambda}(\mathbf x)$ to $- \Psi_{\boldsymbol\lambda}(\mathbf x)$. 
The eigenstate thus remains the same. 
This symmetry must be taken into account when labeling and classifying the states.

\begin{remark}
In the limit $\gamma \to 0$ (no interaction) the wavefunction \eqref{EqT:wave} becomes:
\be
 \Psi_{\boldsymbol\lambda}(\mathbf x) = \rm Perm \left( \sin (\lambda_i x_j) + \frac{\lambda_i}{A} \cos(\lambda_i x_j) \right)_{i,j=1}^n
\ee
where ${\rm Perm}(M)=\sum_{\sigma\in S_n} \prod_{i=1,n} M_{i,\sigma_i}$ is the permanent of the $n \times n$ matrix $M$, 
as expected for a non-interacting bosonic eigenstate. Inside the permanent, the function $x\mapsto \sin(\lambda x) +\frac{\lambda}{A}\cos(\lambda x)$ is a parametrization of solutions for $n=1$. For general $n$, the $\lambda_i$  take
value in the set of the solutions of the equation
\be \label{1}
e^{2 i \lambda} = \frac{A -i \lambda}{A + i \lambda } \frac{B- i \lambda }{B+i \lambda}.
\ee 
For $\gamma=0$ some of the $\lambda_i$ may coincide. The non-interacting ground state
corresponds to all $\lambda_i=\lambda_{\min}$ where $\lambda_{\min}$ is the
solution of \eqref{1} with smallest value of $\lambda^2$ (note that $\lambda^2$ is real, i.e. $\lambda$ is real or imaginary). 
\end{remark}
\begin{remark}
Back to the case of $\gamma>0$, there are many branches of solutions of the Bethe equations \eqref{EqT:BetheEq1}, which as $\gamma \to 0$ 
converge to the solutions of the non-interacting case. However, for generic values of $\gamma>0$,  all the $\lambda_i$ are distinct.
Some of them may thus tend to each others as $\gamma \to 0$.
\end{remark}

\subsubsection{A functional of the Bethe roots} 
To make easier the comparison with \cite{brunet2000ground,brunet2000probability} we will introduce the 
notations used there
\be 
\I \lambda_\alpha = \frac{2}{L} q_\alpha, \quad  \quad c = \gamma L/2, 
\ee 
and from now on set $\gamma=1$. In these variables we want to compute 
\be \label{E0} 
E_0(n,L) = \frac{1}{2} \sum_{\alpha=1}^n \lambda_\alpha^2 = - \frac{2}{L^2} \sum_{\alpha=1}^n q_\alpha^2 
\ee 
for the solution of the Bethe equations which corresponds to the ground state. The $\{ q_\alpha \}$ satisfy 
the Bethe equations (note the exponent  $4$ in the left hand side) 
\be \label{Betheab} 
e^{4 q_\alpha} = \prod_{\beta \neq \alpha} \frac{q_\alpha - q_\beta + c}{q_\alpha - q_\beta - c}
 \frac{q_\alpha + q_\beta + c}{q_\alpha + q_\beta - c} \frac{a-q_\alpha}{a+q_\alpha} \frac{b-q_\alpha}{b+q_\alpha} 
\ee 
where we have introduced the parameters $a=A L/2$, $b=B L/2$. 

Let us define an index set  $I_n=\lbrace 1,\dots, n, -1, \dots, -n\rbrace$ and a set of complex numbers  $\mathcal Q=\lbrace q_{\alpha}\rbrace_{\alpha\in I_n} = \lbrace q_1, \dots, q_n, q_{-1}, \dots, q_{-n} \rbrace$ where we use the notational convention  $q_{-i}=-q_i$ and  for $\alpha \in I_n $, 
$$ \rho_\alpha = \prod_{q\in \mathcal Q\setminus q_{\alpha}} \frac{q_{\alpha}-q+c}{q_\alpha-q}.$$
Next, we define a function $L(z)$ as 
\be  \label{defLz} 
L(z) = \sum_{\alpha\in I_n}  \rho_\alpha e^{q_{\alpha}z}.
\ee 
It satisfies
\be 
L(0)= 2 n, \quad  \quad L'(0)= n(2n-1) c, \quad  \quad L''(0) = \frac{2}{3} n (n-1) (2 n-1) c^2 + 2 \sum_{\alpha=1}^n q_\alpha^2.
\ee
The last three equations comes from the relations 
\begin{align}
    &\sum_{\alpha \in I_n} \rho_\alpha=2n,\\
    &\sum_{\alpha \in I_n} q_{\alpha}\rho_\alpha=c\binom{2n}{2},\\
    &\sum_{\alpha \in I_n} q_{\alpha}^2 \rho_\alpha=c^2 \binom{2n}{3} + \sum_{\alpha=-n}^n (q_{\alpha})^2,
\end{align}
which are proved in \cite[Appendix A, Eq. (A6-A8)]{brunet2000probability}, where here the set $I_n$ has $2 n$ elements instead of $n$. 

\subsubsection{Integral equation and Bethe equations satisfied by the functional}
We now show that the function $L(z)$ satisfies the integral equation 
\be \label{eqL}
    \int_0^w e^{-c z/2} L(-z) e^{c (w-z) /2} L(w-z)dz = 
     \frac{1}{c} \left(L(w) e^{c w/2} - L(-w) e^{-c w/2}\right)  
\ee 
coming from the following identities, valid for any $\beta\in I_n$: 
\be \sum_{\alpha \in I_n} \frac{\rho_\alpha}{q_{\alpha}-q_{\beta}+c} = \sum_{\alpha \in I_n} \frac{\rho_\alpha}{q_{\alpha}+q_{\beta}+c} =\frac{1}{c}
\ee 
which are proved in \cite[Appendix A, Eq. (A9)]{brunet2000probability}.
Indeed, one has
\begin{align} 
    \int_0^w e^{-c z/2} L(-z) e^{c (w-z) /2} L(w-z)dz &= 
    \sum_{\alpha \in I_n }  \sum_{\beta \in I_n } \frac{\rho_\alpha \rho_\beta}{q_\alpha + q_\beta + c} \left( e^{q_\beta w + \frac{c w}{2}} - 
    e^{- q_\alpha w - \frac{c w}{2}} \right) \\
 &= \frac{1}{c} \left(L(w) e^{c w/2} - L(-w) e^{-c w/2}\right) 
\end{align}

Now one can check that the Bethe equations \eqref{Betheab} can be put under the following form: for all $\alpha\in \lbrace 1, \dots, n\rbrace$, 
\be 
e^{2 q_\alpha} ( 2 q_\alpha+ c ) (a+q_\alpha)(b+q_\alpha)   \rho_{-\alpha} = e^{- 2 q_\alpha} (2 q_\alpha- c)   (a-q_\alpha)(b-q_\alpha) \rho_\alpha  .
\label{eq:Betheequationsrhotilde22}
\ee 
Indeed, by the definition of $\rho_{\alpha}$, 
\begin{multline} 
\frac{\rho_{\alpha}}{\rho_{-\alpha}} = \frac{2q_\alpha+c}{2q_\alpha} \frac{-2 q_\alpha}{-2q_\alpha +c} \prod_{1\leq \beta\leq n, \beta\neq \alpha} \frac{q_\alpha - q_\beta +c}{-q_\alpha - q_\beta +c}\frac{q_\alpha + q_\beta +c}{-q_\alpha + q_\beta +c}\\  =  \frac{2q_\alpha+c}{2q_\alpha-c }  \prod_{1\leq \beta\leq n, \beta\neq \alpha} \frac{q_\alpha - q_\beta +c}{q_\alpha - q_\beta -c}\frac{q_\alpha + q_\beta +c}{q_\alpha + q_\beta -c},
\end{multline}
so that \eqref{Betheab} can be rewritten as \eqref{eq:Betheequationsrhotilde22}. Note that \eqref{eq:Betheequationsrhotilde22} is stil true for $\alpha\in \lbrace -1, \dots, -n\rbrace$ as can be checked by changing $q_\alpha$ in $-q_{-\alpha}$.  From the definition of
$L(z)$ in \eqref{defLz}, we see that the symmetry condition \eqref{eq:Betheequationsrhotilde22} 
implies the following equation 
\be \label{sym2}
(2 \partial_z - c) (a-\partial_z) (b-\partial_z) L(z) = 
- (2 \partial_y - c) (a-\partial_y) (b-\partial_y) L(y)|_{y=-(4+z)}
\ee 
which relates derivatives of $L$ at argument $z$ and $-(4+z)$. We will refer to this equation as the functional Bethe equation. 

\subsubsection{A more convenient functional}
The form of \eqref{eqL} suggests to define the function 
\be \label{defM} 
M(z) = \frac{1}{2 n} e^{cz/2}L(z) 
\ee 
which satisfies 
\be 
M(0)= 1, \quad \quad M'(0) = n c, \quad  \quad M''(0) = \frac{1}{12} c^2 (1 + 8 n^2) + \frac{1}{n} \sum_{\alpha=1}^n q_\alpha^2 ,
\ee 
Hence, from \eqref{E0} we see that if one knows $M(z)$ one can extract the energy as
\be  \label{E0nL}
E_0(n,L) = - \frac{2 n}{L^2} \left( M''(0) - \frac{1}{12} c^2 (1 + 8 n^2) \right) 
\ee 
Using \eqref{defM} and \eqref{eqL} we find that $M(z)$ satisfies the integral equation
\be 
2 n c \int_0^z M(-z') M(z-z')dz' =  M(z) - M(-z). 
\label{integralequationM}
\ee 
The functional Bethe equation \eqref{sym2} now becomes 
\begin{multline} \label{MBethe} 
 e^{-c z/2} (\partial_z-c) (a + c/2 - \partial_z) (b + c/2 - \partial_z) M(z) \\  
= -  e^{-c y/2} (\partial_y-c) (a + c/2 - \partial_y) (b + c/2 - \partial_y) M(y)|_{y=-(4+z) }.
\end{multline}
We recover the parameters $u,v$ defined in \eqref{eq:boundary} appearing in the equation above since  $cu=a+c/2$, $cv= b+c/2$. 

\begin{remark}
If we define a function $B$ via 
 \be \label{MtoB}
 M(z) = e^{\frac{-c z^2}{4}}B(1+z), 
\ee 
this function $B$ is analogous to the function $B$ introduced in \cite{brunet2000ground, brunet2000probability} to
    study the periodic case in the sense that the integral equation \eqref{integralequationM} would be the same as in \cite{brunet2000ground, brunet2000probability}. 
In the periodic case, the functional Bethe equation takes the simple form $B(x)=B(-x)$. 
In the present case of the interval the functional Bethe equation \eqref{MBethe} is clearly more involved.
\end{remark}

\subsection{Expansion of $M(z)$ in powers series in $n$: first approach} \label{subsec:expansion1}
We write 
\be 
M(z) = \sum_{k=0}^{\infty} n^k m_k(z).
\ee 
We will determine the coefficients $m_k(z)$. The integral equation \eqref{integralequationM} allows to determine the coefficients by recurrence knowing the previous ones (and using the functional Bethe equation). 
In order to deal with the functional Bethe equation, it is convenient to write 
\be 
M(z) = \int_{\I\R} \frac{dw}{2\I\pi} \hat M(w) e^{czw}, \;\;\;\; m_k(z) = \int_{\I\R} \frac{dw}{2\I\pi} \hat m_k(w) e^{czw}.
\ee 

The functional Bethe equation \eqref{MBethe} may now be written as 
\begin{multline} 
\int_{\I\R} \frac{dw}{2\I\pi}  \hat M(w)(w-1)(u-w)(v-w) e^{cz(w-1/2)} \\  = -\int_{\I\R}  \frac{dw'}{2\I\pi}  \hat M(w')(w'-1)(u-w')(v-w') e^{-c(4+z)(w'-1/2)}.
\label{eq:FBEhat}
\end{multline}
Let us use the change of variables $w'=1-w$ in the right hand side of \eqref{eq:FBEhat}, and deform back the contour for $w'$ to the imaginary axis $\I\R$ -- here, we implicitly assume that the function $w\mapsto \hat M(w)(w-1)(u-w)(v-w)$ is holomorphic in the strip 
\be \strip=\lbrace w\in \C; 0\leq \Re[w]\leq 1\rbrace,
\label{eq:defstrip}
\ee 
so that no singularities are encountered during the deformation of the contour. We see that the equation \eqref{eq:FBEhat} will be satisfied if one has 
\be 
(w-1)(u-w)(v-w)\hat M(w) = w(u-1+w)(v-1+w)\hat M(1-w)e^{-2c + 4cw}.
\label{eq:FBEhat2}
\ee 
Since the functional Bethe equation does not depend explicitly on $n$, each coefficient $\hat m_k(w)$ will likewise be required to  satisfy   \eqref{eq:FBEhat2} and $\hat m_k(w)(w-1)(u-w)(v-w)$ will be required to 
be holomorphic in the strip $\strip$.

Let us consider the first order coefficient $m_0(z)$. The integral equation \eqref{integralequationM} reduces to 
\be m_0(z) = m_0(-z)\ee which is satisfied when $\hat m_0(w)=\hat m_0(-w)$. Then we may write  
\be 
\hat m_0(w) = \hat m_0(w-1) \frac{w}{w-1} \frac{u+w-1}{u-w} \frac{v+w-1}{v-w} \frac{e^{2c w^2}}{e^{2c(1-w)^2}}.
\ee 
This is reminiscent of the functional equation satisfied by the Gamma function. To exploit this, it will be convenient to write  
\be 
\hat m_0(w) = \Gamma(1+ w) \Gamma(1- w) \Gamma(u+ w) \Gamma(u- w) \Gamma(v+ w) \Gamma(v- w)  e^{2cw^2} \hat g(w) 
\ee 
and the constraints on $\hat m_0$ become  $\hat g(w) = \hat g(-w)$ and $\hat g(w)=-\hat g(1-w) $. When $u,v>0$, the condition that $\hat m_0(w)(w-1)(u-w)(v-w)$ is holomomorphic in $\strip$  corresponds to imposing that  $\hat g$ is holomorphic in $\strip$. Let us restrict to this case $u,v>0$ from now on.
There are many functions $\hat g$  satisfying these conditions. If we look for a function $\hat g$ with the smallest possible growth at infinity, one is led to the choice $\hat g(w) \propto \sin(2\pi w)\sin(\pi w)$ which corresponds to 
\be 
\hat m_0(w) = \frac{\Psi(w)}{\int_{\I\R }\frac{dw'}{2\I\pi} \Psi(w') }
\ee 
where the denominator is simply a normalizing constant chosen so that $m_0(0)=1$, and
$\Psi(w)$ is the function defined in \eqref{defPsi}. It is quite encouraging to see
that this function again emerges, within this completely different method. 

\begin{remark}
    There was some freedom in the choice of $\hat g$. A similar property was noticed 
    in the periodic case in \cite{brunet2000probability}. One can surmise
    that the other choices may correspond to other solutions of the Bethe equations,
    e.g. corresponding to excited states. 
\end{remark}

\subsection{Systematic approach} \label{subsec:systematic}
More generally, the function $\Psi(w)$ will be convenient to deal with the Bethe equation at all orders in the expansion in $n$. 
Recall that we have defined in \eqref{eq:defnu}
\be 
\nu(dw) = \frac{\Psi(w) \frac{dw}{2\I\pi}}{\int_{\I\R }\frac{dw'}{2\I\pi} \Psi(w') } 
\ee 
so that we are looking for a function $f(w)$ such that 
\be \label{decomposeM} 
M(z) = \int_{\I\R} \nu(dw) f(w)  e^{c z w},   \quad  \quad \int_{\I\R} \nu(dw) f(w) =1  
\ee 
where $M'(0)=nc$.
As we have already discussed, in order to satisfy the Bethe equations \eqref{MBethe}, the function $f(w)$ must be holomorphic in the strip $\strip$ and satisfy the symmetry $f(w) = f(1-w)$. 
Furthermore, inserting \eqref{decomposeM} into \eqref{integralequationM} we see that 
$f(w)$ must also satisfy the integral equation 
    \begin{equation} \label{eqfrestart} 
    \int_{\I\R} \nu(dw) f(w) (e^{cwz}-e^{-cwz}) = 2n \iint_{(\I\R)^2} \nu(dw_1)\nu(dw_2) f(w_1) f(w_2) \frac{e^{czw_2}-e^{-cz w_1}}{w_1+w_2}.
\end{equation}
The following lemma gives a sufficient condition on $f(w)$ to satisfy \eqref{eqfrestart}.
\begin{lemma} Let $\delta\in (0,1)$. Assume that a function $f$, holomorphic in the strip $\strip$,  satisfies for all $w_1\in \strip$, 
\be
f(w_1) = 2 n  \int_{\I\R + \delta} \nu(d w_2) \frac{f(w_1)  f(w_2)}{w_1+w_2} + h(w_1)
 \label{eqff} 
\ee 
where  $h$ is holomorphic in $\strip$ and for all $w_1\in \I\R$, $h(-w_1) = h(w_1)$. Then f satisfies \eqref{eqfrestart}.
\end{lemma}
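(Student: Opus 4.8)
The plan is to prove the integral identity \eqref{eqfrestart} directly, by substituting the pointwise relation \eqref{eqff} into its right-hand side; no analytic continuation in $z$ and no completeness of the exponentials $\{e^{czw}\}$ are needed, since both sides will be reduced to the \emph{same} explicit integral over $\I\R$. The first step is to symmetrize. Since $\Psi(-w)=\Psi(w)$ is manifest from \eqref{defPsi}, the measure $\nu$ of \eqref{eq:defnu} is invariant under $w\mapsto -w$, so in the left-hand side of \eqref{eqfrestart} the term $\int_{\I\R}\nu(dw)\,f(w)e^{-czw}$ equals $\int_{\I\R}\nu(dw)\,f(-w)e^{czw}$; hence the left-hand side is $\int_{\I\R}\nu(dw)\,(f(w)-f(-w))\,e^{czw}$, where every exponential now carries the same sign.

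For the right-hand side, observe that its integrand $\frac{f(w_1)f(w_2)}{w_1+w_2}(e^{czw_2}-e^{-czw_1})$ has only a removable singularity on the antidiagonal $w_1=-w_2$, so, for fixed $w_1\in\I\R$, the inner $w_2$-integral may be deformed from $\I\R$ to $\I\R+\delta$, choosing $\delta$ small enough that $\I\R+\delta$ lies to the left of every pole of $\Psi$ and recalling that $f$ is holomorphic on $\strip\supset\I\R+\delta$. With $w_1\in\I\R$ and $w_2\in\I\R+\delta$ one has $w_1+w_2\neq 0$, so the two exponentials can be separated. In the $e^{czw_2}$-piece I perform the $w_1$-integral first, shifting its contour also to $\I\R+\delta$ (no pole is crossed, as $-w_2$ has negative real part), which produces the inner integral $\int_{\I\R+\delta}\nu(dw_1)\frac{f(w_1)}{w_1+w_2}$; in the $e^{-czw_1}$-piece I perform the $w_2$-integral first, producing $\int_{\I\R+\delta}\nu(dw_2)\frac{f(w_2)}{w_1+w_2}$. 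Each of these is precisely the integral appearing in \eqref{eqff}: that hypothesis reads $2n\,f(w)\int_{\I\R+\delta}\nu(dw')\frac{f(w')}{w+w'}=f(w)-h(w)$ for all $w\in\strip$, so applying it with $w=w_2\in\I\R+\delta$ in the first piece and with $w=w_1\in\I\R$ in the second, the right-hand side of \eqref{eqfrestart} becomes
\[
\int_{\I\R+\delta}\nu(dw)\,(f(w)-h(w))\,e^{czw}\;-\;\int_{\I\R}\nu(dw)\,(f(w)-h(w))\,e^{-czw}.
\]
In the first integral I deform the contour back to $\I\R$ (the integrand is holomorphic on $\strip$), and in the second I substitute $w\mapsto -w$ and use $h(-w)=h(w)$ on $\I\R$ together with the invariance of $\nu$; the two $h$-contributions cancel, leaving $\int_{\I\R}\nu(dw)\,(f(w)-f(-w))\,e^{czw}$, which is exactly the symmetrized left-hand side. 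This establishes \eqref{eqfrestart}.

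The only real difficulty is the bookkeeping of these contour deformations and of the applications of Fubini's theorem: at each step one must check that no pole of $\Psi$ (the poles sit at $\pm u-j$ and $\pm v-j$, $j\in\Z_{\geq0}$, hence outside a thin enough strip around $\I\R$) and no singularity of $f$ is swept across, which is exactly what forces $\delta$ to be small and makes the hypothesis that $f$ be holomorphic on \emph{all} of $\strip$ indispensable rather than merely convenient. One must also control the decay at $\I\infty$ to justify absolute convergence and the interchange of integrations; this is provided by the Gaussian factor $e^{w^2L}$ in $\Psi$, assuming the growth of $f$ along the imaginary axis is at most of the reciprocal order — the kind of a priori regularity that the replica computation leaves implicit.
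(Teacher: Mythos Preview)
Your proof is correct and uses essentially the same ingredients as the paper's argument—contour shifts across the removable singularity at $w_1+w_2=0$, the evenness of $h$ and $\nu$ on $\I\R$, and the holomorphy of $f$ in $\strip$—only run in the opposite direction: the paper substitutes \eqref{eqff} into the \emph{left}-hand side of \eqref{eqfrestart} (the $h$-piece then vanishes by oddness, and a swap $(w_1,w_2)\mapsto(w_2,w_1)$ in one term gives the right-hand side), whereas you substitute it into the \emph{right}-hand side and reduce both sides to the common symmetrized form $\int_{\I\R}\nu(dw)(f(w)-f(-w))e^{czw}$. The two routes are logically equivalent and of the same length.

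One minor remark: $\delta\in(0,1)$ is \emph{given} in the hypothesis, not to be chosen, so your aside about ``choosing $\delta$ small enough'' is better phrased as a standing assumption on $u,v$ (namely $u,v>\delta$, or more generally that $\Psi$ has no pole in $0\le\Re w\le\delta$), which is indeed implicit in the paper's setting $u,v>0$.
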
 
\begin{proof}
Plugging \eqref{eqff} in the left-hand-side of \eqref{eqfrestart}, we obtain 
\be 
2n \int_{\I\R} \nu(dw_1) \int_{\I\R+\delta} \nu(dw_2)\frac{ f(w_1) f(w_2)}{w_1+w_2} (e^{czw_1}-e^{-cz w_1}) + \int_{\I\R} \nu(dw_1) h(w_1) (e^{czw_1}-e^{-cz w_1}).
\ee 
The second integral vanishes because the integrand is antisymmetric and the contour is $\I\R$. In the first integral, we may shift the $w_1$ contour to $\I\R+\delta$  
(using the fact that $f$ is holomorphic in the strip). We obtain 
\be 2n \int_{\I\R+\delta} \nu(dw_1) \int_{\I\R+\delta} \nu(dw_2)\frac{ f(w_1) f(w_2)}{w_1+w_2} (e^{czw_1}-e^{-cz w_1}).  \ee
At this point, the factor $(e^{czw_1}-e^{-cz w_1})$ yields a difference of two integrals. In the first term, we use the change of variables $(w_1,w_2)\to (w_2,w_1)$, and we keep the second term unchanged. This yields 
\be \label{xx} 
2n \int_{\I\R+\delta} \nu(dw_1) \int_{\I\R+\delta} \nu(dw_2)\frac{ f(w_1) f(w_2)}{w_1+w_2} (e^{czw_2}-e^{-cz w_1}). 
\ee
Finally, we may shift the contours back to $\I\R$ since there is no singularity when $w_1+w_2=0$, and we arrive at the right-hand-side of \eqref{eqfrestart}. 
\end{proof}

Hence we can trade the original functional equation \eqref{eqfrestart} for the equation \eqref{eqff}, which we now study perturbatively in $n$. We assume that we can expand 
\be \label{expn} 
 f(w) = 1 + n f_1(w) + n^2 f_2(w) + \dots, \quad  \quad h(w) = 1 + n h_1(w) + n^2 h_2(w) + \dots
\ee 
Since $\int_{\I\R } d\nu(w) f(w)=1$ and $\int_{\I\R } d\nu(w)=1$, we must have $\int_{\I\R} d\nu(w) f_k(w)=0$ for $k \geq 1$. 
To order $n^0$ one has 
\be 
m_0(z) = \int_{\I\R} \nu(dw) e^{c z w},  \quad  \quad m''_0(0) = c^2 \int_{\I\R} \nu(dw) w^2 
\ee 
which allows to obtain the first cumulant $c_1=c_1(u,v,L)$ (recalling that $L=2 c$)
\be \label{firstmom}
c_1= - \lim_{n \to 0} \frac{E_0(n,L)}{n} = \frac{2}{L^2} \left(m_0''(0) - \frac{c^2}{12}\right) 
= \frac{1}{2} \left( \frac{-1}{12} +\frac{\int_{\I\R} \frac{dw}{2\I \pi} \Psi(w) w^2 }{\int_{\I\R} \frac{dw}{2\I \pi} \Psi(w) }  \right) 
\ee 
thus
recovering the result \eqref{eq:firstorderintro} obtained by the other method. 

\subsection{Expansion at order $O(n)$ and second cumulant $c_2(u,v,L)$}
At order $n$,  one must solve for $w_1 \in \strip$ (where $\strip$  is the strip defined in \eqref{eq:defstrip})
\be \label{eqf10}
f_1(w_1) = 2   \int_{\I\R + \delta} \nu(d w_2) \frac{1}{w_1+w_2}  + h_1(w_1), \quad  \quad h_1(-w_1) = h_1(w_1) 
\ee 
together with the functional Bethe equation $f(w_1)=f(1-w_1)$.
We show below that a solution of this problem is 
\be 
f_1(w_1) = 2\int_{\I\R+\delta} \nu(dw_2) K(w_1,w_2) - C_1
\label{eq:deff1}
\ee 
where $C_1$ is a constant (to be determined later) such that $\int_{\I\R} d\nu(w_1) f_1(w_1)=0$, 
and 
where the kernel $K$ is defined  by 
\be 
\label{eq:defK}
K(w_1,w_2) = \frac{1}{w_1+w_2} -\psi(1+w_1-w_2) -\psi(1+w_2-w_1).
\ee 
Indeed, we see on this form that the first term corresponds to the first term in \eqref{eqf10}, 
and we will show below that the second term leads to an even function $h_1(w_1)$
and allows $f_1(w_1)$ to satisfy the functional Bethe equation.

First of all, given the location of poles of the digamma function on $\mathbb Z_{\leq 0}$, it is clear that the function $f_1$ is holomorphic in $\strip$.  Let us now check the functional Bethe equation and the condition \eqref{eqf10}. 
In the interior of $\strip$, that is when $0 < \Re[w_1]<1$, we may shift the contour in \eqref{eq:deff1} so that  
\be 
f_1(w_1) =2  \int_{\I\R}  \nu(dw_2)  K(w_1,w_2) - C_1.
\label{eq:insidethestrip}
\ee
If we let  $w_1\to 1$ or $w_1\to 0$, the integrand becomes singular, but it  can be analytically continued: the  analytic continuation of \eqref{eq:insidethestrip} to all $w_1 \in \strip$ is precisely \eqref{eq:deff1}. 
In the interior of $\strip$, the function $f_1$ may also be rewritten as 
\be 
f_1(w_1) = \int_{\I\R}  \nu(dw_2)  \left( K(w_1,w_2)+K(w_1,-w_2)\right) - C_1.
\ee
Thanks to properties of the digamma function (see below in Remark \ref{rem:digamma}) the symmetrized kernel $K^{\rm Sym}(w_1, w_2):=K(w_1,w_2)+K(w_1,-w_2)$ satisfies $K^{\rm sym}(w_1, w_2)=K^{\rm sym}(1-w_1,w_2)$, so that $f_1(w_1)$ satisfies the Bethe equation in the interior of the strip. Moreover, if the function $f_1(w_1)-f_1(1-w_1)$ is zero inside the strip, its analytic continuation must be zero as well on $\strip$, which proves that \eqref{eq:deff1} satisfies the Bethe equation. 

Now, let us explain why $f_1$  satisfies \eqref{eqf10}. Let 
\be 
h_1(w_1):= f_1(w_1) -  2   \int_{\I\R + \delta} \nu(dw_2) \frac{1}{w_1+w_2}  = -C_1 -2 \int_{\I\R+\delta} \nu(dw_2) (\psi(1+w_2-w_1)+\psi(1+w_1-w_2))
\ee 
We need to prove that $h_1(w_1)$ is holomorphic on the strip $0\leq \Re[w_1] \leq 1$ and that for any $w_1\in \I\R$, $h_1(w_1)=h(-w_1)$. The function $z\mapsto \psi(1+z)$ has singularities only at the points $z=-1, -2, \dots$ For $0\leq \Re[w_1] \leq 1$, we have that both $\Re[w_1-w_2]\geq -\delta$ and $\Re[w_2-w_1] \geqslant  1-\delta$  which remains away from the singularities, so that $h_1$  is holomorphic as desired. Further, for $w_1\in \I\R$, we may write 
\be 
h_1(w_1) = -C_1 -2 \int_{\I\R} \nu(dw_2) (\psi(1+w_2-w_1)+\psi(1+w_1-w_2))
\ee 
so that 
      
\be 
h_1(-w_1)-h_1(w_1) = 2\int_{\I\R} \nu(dw_2) (\psi(1+w_2-w_1)+\psi(1+w_1-w_2)- \psi(1+w_2+w_1)+\psi(1-w_1-w_2))
\ee 
Since the function $w_2\mapsto \psi(1+w_2\pm w_1) - \psi(1-w_2\pm w_1)$ is antisymmetric and we integrate over $\I\R$, the integral equals $0$ so that $h$ is an even function as desired.

\begin{remark}
\label{rem:digamma}
We have postulated in \eqref{eq:defK} a very specific form for the kernel $K$. Let us explain where it comes from. First of all, in view of \eqref{eqf10}, it is natural to look for a kernel $K$ of the form 
\be 
K(w_1,w_2) = \frac{1}{w_1+w_2}+ F(w_1-w_2)+F(w_2-w_1). 
\ee 
As we have seen, under some hypotheses on $F$, a sufficient condition in order for $f_1$ to satisfy the functional Bethe equation is that the kernel $K^{\rm Sym}(w_1, w_2):=K(w_1,w_2)+K(w_1,-w_2)$ satisfies the functional Bethe equation for all $w_2$. This is equivalent to the functional equation on $F$: 
\be  
\frac{1}{-1+x} +\frac{1}{x} -F(1-x) -F(-1+x) + F(-x)+F(x)=0.
\ee
This functional equation characterizes $F(x)+F(-x)$ up to an even periodic function. To solve it, it is convenient to observe that the equation is satisfied in particular when one of the following simpler functional equations is satisfied:   
\be 
F(x+1)=\frac{-1}{x}+F(x) \text{ or }F(x-1)=\frac{1}{-1+x}+ F(x).
\ee 
We recognize the functional equation satisfied by digamma functions, 
\be 
F(x) = -\psi(1-x) \text{ or } F(x)=-\psi(1+x)
\ee 
leading to  \eqref{eq:defK}. 
\end{remark}

\subsubsection{Second cumulant}
Recall that 
\be  
E_0(n,L) = - \frac{2 n}{L^2} \left( M''(0) - \frac{1}{12} c^2 (1 + 8 n^2) \right).
\ee 
We now use the solution for $f_1(w)$ to obtain the second cumulant $c_2(u,v,L)$.
From the definition \eqref{E0cum}, and using \eqref{E0nL}
one has 
\be 
 c_2(u,v,L) = - \partial_n^2 E_0(n,L)|_{n=0}  =  \frac{4}{L^2} m_1''(0) 
\ee
where 
\begin{align*}
 m_1(z) &= \int_{\I\R} \nu(dw_1) f_1(w_1) \\
 &=2\int_{\I\R} \nu(dw_1) \int_{\delta+\I\R } \nu(dw_2) K(w_1,w_2) e^{cw_1z} - C_1 \int_{\I\R} \nu(dw_1)e^{cw_1z}.
\end{align*}  
The constraint $m_1(0)=0$ implies that 
\be 
C_1  =  2\int_{\I\R} \nu(dw_1) \int_{\delta+\I\R } \nu(dw_2) K(w_1,w_2).
\ee 
Thus, 
\be 
m_1''(0) = 2\int_{\I\R} \nu(dw_1) \int_{\delta+\I\R } \nu(dw_2) K(w_1,w_2) \left(c^2 w_1^2 - m_0''(0) \right), 
\ee 
leading to the main result of section, i.e. the following expression for the second cumulant 
 \be  
 \label{eq:c2new}
c_2(u,v,L)  =2 \int_{\I\R} \nu(dw_1) \int_{\delta+\I\R } \nu(dw_2) K(w_1,w_2) \left( w_1^2 -\int_{\I\R } d\nu(w) w^2 \right)
 \ee 
 where the kernel $K$ is defined in \eqref{eq:defK}. 
\\

We now show that the expression \eqref{eq:c2new} is identical to the result obtained by the other method in \eqref{c2exact}.
Indeed, since we can shift the contour of integration for $w_2$ in \eqref{c2exact} from $\I\R$ to $\I\R+\delta$,  the  difference between \eqref{eq:c2new} and \eqref{c2exact} is 
\be 
\int_{\I\R} \nu(dw_1) \left(  \int_{\delta+\I\R } \nu(dw_2) \frac{2}{w_1+w_2} -\widetilde \nu(w_1) \right) \left(w_1^2 - \int_{\I\R } \nu(dw) w^2 \right)
\ee 
where we defined $\widetilde \nu(w)=\Psi(w)/\int_{\I \mathbb{R}} \frac{dw'}{2 \I \pi} \Psi(w')$. To show that this difference is zero, it suffices to show that the function 
$$ \Delta(w_1):=  \int_{\delta+\I\R } d\nu(w_2) \frac{2}{w_1+w_2} -\nu(w_1) $$
is odd on $\I\R$. This is easily verified: 
\begin{align}
 \frac{1}{2} ( \Delta (w_1)+ \Delta(-w_1) ) 
&= \int_{\I\R+\delta}  \nu(dw_2) \frac{1}{w_1+w_2}  + \int_{\I\R+\delta}  \nu(d w_2) \frac{1}{-w_1+w_2}  - \nu(w_1),  \\
&= \int_{\I\R+\delta}  \nu(d w_2) \frac{1}{w_1+w_2}  - \int_{\I\R-\delta}  \nu(dw_2) \frac{1}{w_1+w_2}  - \nu(w_1), \\
&= 0, 
\end{align}
where we used the residue theorem and the fact that $\widetilde \nu(w_1)$ is an even function of $w_1$. 

\section{Periodic KPZ equation}
\label{sec:periodic}
In this section,  we sketch how the calculations of the previous sections can be repeated in
the case of the periodic case, i.e. for the KPZ equation \eqref{eq:KPZ} on a ring $x \in [0,L]$ with the condition $h(0,t)=h(L,t)$.
For this case, the calculation of the cumulant generating function,  $E_{\rm per}(s)$,  and of the cumulants
$c_k^{\rm per}(L)$ was considered 
in \cite{brunet2000probability,brunet2000ground} and in \cite{prolhac2010tree}.
Here we obtain a priori distinct formulas,  which we believe should be equivalent (we show it only in a few cases). 

Let us start with the first method, i.e. from the limit from the periodic ASEP. Using 
Remark \ref{rem:periodic} and the same rescaling as in Section \ref{sec:ASEPtoKPZ},
which leads from the ASEP
to the KPZ equation, we obtain the same equations \eqref{systemEmu1}, \eqref{systemEmu2} 
for $E_{\rm per}(s)$, where $U(w)$ is replaced by $U_{\rm per}(w)$, solution of the functional
equation
    \be 
U_{\rm per}(w) = -  \log\left( 1-2 \zeta \Psi(w) e^{\frac{1}{2} \kk U_{\rm per}(w)} \right),
\label{eq:functionalequationper}
\ee 
and $\Psi(w) = e^{2 L w^2}$. This is obtained using \eqref{functasepperiodic} and that when $N=\ell/2$, under the scalings of Section \ref{sec:ASEPtoKPZ} with $z=e^{\eps w}$,   $F_{\rm per}(z)\simeq 2^{4L\eps^{-2}} e^{\frac{L}{2}w^2}$.

From \eqref{eq:functionalequationper} we see that $U_{\rm per}(w)=\frac{1}{2} U(w)|_{\Psi(w) = e^{\frac{L}{2} w^2}}$,
where $U(w)|_{\Psi(w) = e^{\frac{L}{2} w^2}}$ is the solution of the functional equation
\eqref{eq:functionalequation}, where $\Psi(w)$ is replaced by $e^{\frac{L}{2} w^2}$. 
Hence we obtain the CGF in the periodic case as
\begin{align} 
    s &= 2 \int_{\I\R} \frac{dw}{2\I\pi} U(w),\\
    E_{\rm per}(s) &= - \frac{1}{24} s + \int_{\I\R} \frac{dw}{2\I\pi} w^2 U(w),  \label{systemEmu2per}
\end{align}
where $U(w)= U(w)|_{\Psi(w) = e^{\frac{L}{2} w^2}}$. Hence the cumulants in the periodic
case are related to those obtained from our previous calculation by simply
replacing the function $\Psi(w)$ by $e^{\frac{L}{2} w^2}$, as 
\be \label{id:per}
c_k^{\rm per}(L) = \frac{1}{2^{k-1}} \times c_k\Big|_{\Psi(w) = e^{\frac{L}{2} w^2}}
\ee 
Let us give explicitly the first two cumulants, which we easily obtain from 
\eqref{eq:firstorderintro} and 
\eqref{eq:secondcumintro}
and the identity \eqref{id:per}, which leads to 
\be \label{nuperiodic}
\nu(dw) =  \sqrt{2 \pi L} e^{\frac{L}{2} w^2} \frac{dw}{2 \I \pi} 
\ee 
The first cumulant reads
\be \label{c1per} 
c_1^{\rm per}(L) = - \frac{1}{24} - \frac{1}{2 L} 
\ee 
in agreement with \cite[Eq. (49)]{brunet2000probability}. 
The second cumulant is
\be \label{c2exactper} 
c_2^{\rm per}(L) =  \frac{1}{2} 
\int_{\I\R} \nu(dw_1) \int_{\I\R} \nu(dw_2) ( \kk(w_1,w_2) +  \delta(w_1-w_2) ) (w_1^2 - \int_{\I\R } \nu(dw) w^2 )
\ee 
which can be rewritten as 
\be \label{c2per2}
c_2^{\rm per}(L) =  \frac{1}{2}   \partial_L \left( \int_{\I\R} \nu(dw_1) \int_{\I\R} \nu(dw_2) \bar \kk(w_1,w_2) \right).
\ee 
In the Appendix \ref{app:comparisonsecondcum}
we show that this formula is equivalent to the one obtained
in \cite[Eq. (49)]{brunet2000probability}. It requires some
calculation, which becomes more tedious beyond the second cumulant. 

Finally, we have also performed the calculation using the second method,
which is sketched in Appendix \ref{app:comparisongeneral}.
Our calculation deviates from the one in \cite{brunet2000probability}
since we use the kernel $\kk$. The formulas, although they should be
equivalent, take a different form. We show their equivalence
only up to the order required to compute the
second cumulant.

\subsection*{Acknowledgments}  G. Barraquand was supported by ANR grants ANR-21-CE40-0019 and ANR-23-ERCB-0007.   P. Le Doussal was supported by ANR grant ANR-23-CE30-0020-01 (EDIPS).

\subsection*{Data availability statement} No datasets were generated or analysed during the current study.

\subsection*{Conflicts of interest}   The authors have no relevant financial or non-financial interests to disclose.

\printbibliography

\newpage 
\appendix
\begin{center}
    \Large\bfseries Appendix
\end{center}
\section{Properties of the kernel $\kk$}
\label{sec:operatork}
The action of the operator $\kk$ defined in \eqref{defoperator} can be interpreted as follows. Consider meromorphic functions $f(w)$ and $g(w)$ which  have no pole to the right of $\I \R$ and  converge to zero fast enough when $\Real[w]\to+\infty$. More concretely, one could assume, for instance,  that $f$ is of the form  $f(w)=\int_0^{+\infty}e^{-wt} \hat f(t)$ with $\hat f$ integrable, and likewise for $g$.
Then, if we define 
\be 
\phi(w) =  f(w)-f(w+1)+ g(-w)-g(1-w), 
\ee 
we have 
\be 
\kk \phi(w) = 2 f(1+w) + 2g(1-w). 
\ee 
There are several ways to prove this, but the simplest is to use complex analysis. Let us first focus on the integral 
\be 
\int_{\I\R } dw' \left(\psi(1+w-w') + \psi(1+w'-w )\right) \left( f(w') -f( w'+1 \right).
\ee 
When we shift the contour to the right, we pick poles of the first digamma function at points $w'=w+i$ for $i\in \mathbb Z_{\geq 1}$. Using the decay of the function $f$ as the real part increases, the integral is simply a sum of residues corresponding to singularities of $\psi(1+w-w')$.  Cauchy's residue theorem yields 
\begin{align} 
\int_{\I\R } dw' \left(\psi(1+w-w') + \psi(1+w'-w )\right) \left( f(w') -f( w'+1) \right)  &= - \sum_{i=1}^{+\infty} [f(w+i)-f(w+i+1)] \\ &= -f(w+1). 
\end{align}
Similarly, by shifting contours to the left, we find that 
\be 
\int_{\I\R } dw' \left(\psi(1+w-w') + \psi(1+w'-w )\right) \left( g(-w)-g(1-w) \right) = -g(1-w).
\ee 
\begin{remark} 
Another approach would be to use the functional equation for the digamma function.  Using $\psi(x+1)=\psi(x)+ \frac{1}{x}$, we may also show that for a function $\phi$ having no poles to the right of $\I\R$ and  decays at $+\infty$, we have that for $-1<\Real[w]<0$, 
$\kk f(w+1)-\kk f(w)=f(w+1)$. A similar equation holds for functions with no poles to the left of $\I\R$.  
\end{remark}

\section{Second cumulant in the periodic case: comparison with \cite{brunet2000probability} }
\label{app:comparisonsecondcum}

Here we show that our formula \eqref{c2per2} for the second cumulant in the periodic case
agrees with the result of \cite{brunet2000probability}. Let us recall our formula
\be 
c_2^{\rm per}(L) =  \frac{1}{2}   \partial_L \left( \int_{\I\R} \nu(dw_1) \int_{\I\R} \nu(dw_2) 
( \kk(w_1,w_2) +  \delta(w_1-w_2) ) \right) 
\ee
where we recall the definition of the delta function in \eqref{defdelta}, and
that $\nu(dw) =  \sqrt{2 \pi L} e^{\frac{L}{2} w^2} \frac{dw}{2 \I \pi}$ 
integrates to unity on the imaginary axis. Using the integral representation 
\eqref{properties} of the digamma function $\psi(z)$, and the definition of $\kk$ in \eqref{defkernel} 
we can write 
\be 
\kk(w_1,w_2)= 4 \gamma_E + 4 \int_0^{+\infty} \frac{dt}{e^t-1} (\cosh t (w_1-w_2)  - 1)
\ee 
Using that 
\be
 \int_{\I \mathbb{R}} \nu(dw_1) \int_{\I \mathbb{R}} \nu(dw_2) \cosh t (w_1-w_2) = e^{-t^2/L} 
\ee 
we obtain
\be \label{res1} 
c_2^{\rm per}(L) =  \partial_L \left( 2 \int_0^{+\infty}  \frac{dt}{e^t-1} (e^{-t^2/L}  - 1) + \frac{\sqrt{\pi L}}{2}  \right) 
\ee 
\\

On the other hand the result of Brunet and Derrida is, from setting $c=L/2$ in
formula (49) in \cite{brunet2000probability} 
\be \label{res2} 
c_2^{\rm per,BD}(L) = 
 - \partial_n^2 E|_{n=0} = \frac{1}{2 \sqrt{2 L}} \int_0^{+\infty} d\lambda \frac{\lambda^2}{\tanh( \frac{\lambda \sqrt{L}}{2 \sqrt{2}})} e^{-\lambda^2/2} 
\ee 
To compare we set $t= \lambda \sqrt{L/2}$ and we use
\bea 
\frac{2}{e^t-1} = \frac{1}{ \tanh(t/2)} - 1 
\eea 
and it is then easy to check that the two results \eqref{res1} and \eqref{res2} agree.

\section{Periodic case using the kernel method }
\label{app:comparisongeneral}

Let us now briefly indicate how to perform the Brunet-Derrida calculation using our conventions
and integral representations in terms of the kernel $K$. The Bethe equations for the periodic case, 
Eqs. (17) from \cite{brunet2000probability}, can
be obtained from \eqref{Betheab} formally setting $a=b=+\infty$ and replacing 
$e^{4 q_\alpha} \to e^{2 q_\alpha}$. The function $M(z)$ is now defined
as $M(z) = \frac{1}{n} e^{cz/2}  \sum_{\alpha\in I'_n}  \rho_\alpha e^{q_{\alpha}z}$
where $I'_n=\lbrace 1,\dots, n\rbrace$. Hence $2n$ is replaced by $n$
everywhere and one now has to solve
\be 
M(z)- M(-z) = n c \int_0^z dz' M(-z') M(z-z')
\label{eq:equationforM2}
\ee 
with $M(0)=1$, $M'(0)=n c/2$, 
together with the Bethe equation $M(z) = e^{c(z+1)} M(-z-2)$.  
These equations are equivalent to  \cite[Eq. (26-29)]{brunet2000probability}
with the correspondence \eqref{MtoB}. From  \cite[Eq. (16)]{brunet2000probability} the ground state
energy reads (this is the analog of \eqref{E0nL} above)
\be  \label{E0nLper}
E_0(n,L) = - \frac{2 n}{L^2} \left( M''(0) - \frac{1}{12} c^2 (1 + 2 n^2) \right) 
\ee 
from which the cumulants can be extracted, using 
$- E_0(n,L) = \sum_{k \geq 1} \frac{c_k^{\rm per}(L)}{k!} n^k$.
From here our calculation deviates from the one of \cite{brunet2000probability}. 
Performing exactly the same steps as in Section \eqref{subsec:expansion1},
we obtain that \eqref{eq:FBEhat2} is replaced by 
$\hat M(w) = \hat M(1-w)e^{-c + 2cw}$. The lowest order 
in the expansion in $n$ is thus given by
\be  
\hat m_0(w) = \frac{\Psi(w)}{\int_{\I\R }\frac{dw'}{2\I\pi} \Psi(w') } = 2 \sqrt{c \pi} e^{c w^2},  \quad  \quad 
\Psi(w)= \Psi^{\rm per}(w) = e^{c w^2}  = e^{\frac{L}{2} w^2}
\ee  
since we recall that $c= \gamma L/2$
and we use $\gamma=1$. We thus recover the result \eqref{nuperiodic} from the first method.
This corresponds to $m_0(z) = e^{- c z^2/4 }$ in agreement 
with $B_0(u)=1$ in (32) in \cite{brunet2000probability}. 

We can then follow the same steps as in Section \ref{subsec:systematic},
with the only change $2 n \to n$ in the equations
from \eqref{eqfrestart} to \eqref{xx}. Defining the same expansion in $n$ as in \eqref{expn},
the order zero gives, using \eqref{E0nLper}, the same result for the first
cumulant $c_1^{\rm per}(L)$ as in \eqref{c1per}, equivalent to the one obtained
in (49) in \cite{brunet2000probability}. 

To next order one must solve \eqref{eqf10} without the factor of $2$. The solution if thus 
\be 
f_1(w_1) = \int_{\I\R+\delta} \nu(dw_2) K(w_1,w_2) - C'_1
\label{eq:deff1per}
\ee 
where $K$ is defined in \eqref{eq:defK} and
$C'_1$ is a constant such that $\int_{\I\R} \nu(dw_1) f_1(w_1) =0$ (where we
recall that $\nu(dw_1)$ is given in \eqref{nuperiodic}).
Hence we obtain for the function $m_1(z)$
\be
m_1(z) 
 = 4 \pi c \int_{\I\R} \frac{dw_1}{2\I \pi} e^{c w_1^2} e^{c z w_1} \int_{\I\R+\delta} \frac{dw_2}{2\I \pi} e^{c w_2^2} 
K(w_1,w_2) - C'_1 m_0(z) 
\ee 
Let us use the representation for ${\rm Re} (w_1+w_2)>0$.
\be 
K(w_1,w_2) = \int_0^{+\infty} dt \left[ e^{- t(w_1+w_2)} 
 + \frac{2}{e^t-1} (\cosh t (w_1-w_2)  - 1) \right] + 2 \gamma_E 
\ee 
We obtain
\be 
m_1(z) = 
e^{- \frac{c z^2}{4} } \bigg( \int_0^{+\infty} dt \left[ e^{\frac{1}{2} t z - \frac{t^2}{2 c} } 
+ \frac{2}{e^t-1} 
\left[ \cosh\left( \frac{t z}{2} \right)
   e^{ - \frac{t^2}{2 c} } - 1 \right] \right] 
   + 2 \gamma_E - C'_1 \bigg) 
\ee 
Taking into account that $m_1(0)=0$ determines $C'_1$ and we obtain
\be 
m_1(z) = 
e^{- \frac{c z^2}{4} }  \int_0^{+\infty} dt \left[ e^{\frac{1}{2} t z }-1 
+ \frac{2}{e^t-1} \left( \cosh\left( \frac{t z}{2} \right) - 1\right) 
     \right] e^{ - \frac{t^2}{2 c} }
\ee 
We now compare with the result of \cite{brunet2000probability}. From eq. (42) there
we can read $b_1(u)$ which using \eqref{MtoB} implies the result for $m_1(z)$
\be 
m_1(z) = m_0(z) \sqrt{c} \int_0^{+\infty} d\lambda e^{- \lambda^2/2} \frac{\cosh(\frac{\lambda (1+z) \sqrt{c}}{2}) 
- \cosh(\frac{\lambda  \sqrt{c}}{2})}{ \sinh(\frac{\lambda  \sqrt{c}}{2}) } \quad , \quad m_0(z)= e^{\frac{-c z^2}{4}} 
\ee 
Setting $\lambda=t/\sqrt{c}$ we see that the two expressions are identical, thanks to the trigonometric
identity
\be
e^{\frac{1}{2} t z }-1
+ \frac{2}{e^t-1} \left( \cosh\left( \frac{t z}{2} \right) - 1\right) = \frac{ \cosh\left( \frac{1+z}{2} t\right) - \cosh \left(\frac{t}{2}\right) }{ \sinh (\frac{t}{2}) }
\ee 
\\
Hence our calculation using the kernel $K$ gives the same function $M(z)$ as the one of 
of \cite{brunet2000probability} at least to the order $O(n^2)$ in the ground state energy.
It shows, among other things, that the way we have fixed some 
arbitrariness in the intermediate steps of the calculation is equivalent to the
way it was done in \cite{brunet2000probability} (as we stressed
above this is presumably related to choosing the ground state
rather than some excited state).
Note that in \cite{brunet2000probability} a recursive construction was
given for the higher orders, i.e. $O(n^q)$ with $q \geq 3$. We have not
attempted to pursue the identification to the higher orders.

\section{Some explicit formula in the  large $L$ limit} 
\label{app:formula}

Let us recall the definitions \eqref{defphipsi}
\be \label{defphipsi3} 
\phi_k=\int_{\R} \frac{dy}{2\pi}  \phi(y)^k, \;\;\; \psi_k=\int_{\R} \frac{dy}{2\pi}  y^2 \phi(y)^k,
\quad  \quad \phi(y) = \frac{4 y^2 e^{-y^2} }{(\widetilde u^2+y^2)(\widetilde v^2+y^2)}. 
\ee 
Here we first provide a systematic method to evaluate these coefficients for general $\tilde u, \tilde v > 0$ in terms of error functions. Let  $a=\tilde u^2$ and $b=\tilde v^2$ and define 
\be 
f(a) := \int_{\mathbb R} \frac{dy}{2 \pi} \frac{e^{-y^2}}{y^2 + a} = \frac{e^a  \text{erfc}\left(\sqrt{a}\right)}{2 \sqrt{a}}. 
\ee 
Then we may write
\be 
\int_{\R} \frac{dy}{2 \pi} \frac{e^{-y^2}}{(y^2 + a)(y^2+b)} = \frac{f(a)-f(b)}{b-a}.
\ee 
Next, one has, for $t>0$ and positive integer $k$, 
\begin{equation}
\int_{\R} \frac{dy}{2 \pi} \frac{e^{-t y^2}}{(y^2 + a)^k(y^2+b)^k} 
= \frac{1}{\Gamma(k)^2} t^{\frac{3}{2} } \partial_{a}^{k-1} \partial_{b}^{k-1} 
\int_{\R} \frac{dy}{2 \pi} \frac{e^{-y^2}}{(y^2 + a t)(y^2+b t)} 
\end{equation}  
Hence one can write 
\begin{align}  \label{phigen} 
 \phi_k  
&= \frac{4^k}{\Gamma(k)^2} (- \partial_t)^k \partial_{a}^{k-1} \partial_{b}^{k-1} \left( t^{\frac{1}{2} } 
\frac{f(a t)-f(b t)}{b-a}  \right) \bigg\vert_{t=k}\\
\label{psigen} 
 \psi_k 
&= \frac{4^k}{\Gamma(k)^2} ( \partial_t)^{k+1} \partial_{a}^{k-1} \partial_{b}^{k-1} \left( t^{\frac{1}{2} } 
\frac{f(a t)-f(b t)}{b-a}  \right) \bigg\vert_{t=k}
\end{align}
One can check that these coefficients obey the relation \eqref{relationphipsi}.
There are special cases where these formula become more explicit. Using the integral
\begin{align*}
 I_{k,p,m}(a) &= \int_{ \mathbb{R}} \frac{dy}{ 2 \pi} \frac{y^{2 p}}{(a + y^2)^{m}}  e^{-k y^2} 
\\ 
&= \frac{1}{\Gamma(m)} \int_0^{+\infty} dt t^{m-1} 
\int_{ \mathbb{R}} \frac{dy}{ 2 \pi}  y^{2 p}  e^{- y^2 ( k + t) - t a }  \\
&= \frac{\Gamma(p+1/2)}{2 \pi \Gamma(m)}  \int_0^{+\infty} dt \, t^{m-1} \frac{e^{-t a}}{(k+t)^{p + \frac{1}{2}}}\\
&= \frac{\Gamma(p+1/2)}{2 \pi } a^{p+\frac{1}{2} - m} U(p+\frac{1}{2},p+ \frac{3}{2} - m,k a) 
\end{align*}
where $U$ is the 
confluent hypergeometric function. We
can now consider several cases. 

\subsection{The case $\tilde u=\tilde v$} In that case one finds
\bea  
&& \phi_k = 4^k I_{k,k,2k}(\tilde u^2) = 
4^k  
 \frac{\Gamma(k+ \frac{1}{2} ) \tilde u^{1- 2 k} }{2 \pi} U\left(k+\frac{1}{2},\frac{
   3}{2}-k,k \tilde u^2\right)  \\
&& \psi_k = 4^k I_{k,k+1,2k}(\tilde u^2) = 4^k 
\frac{\Gamma(k+\frac{3}{2}) \tilde u^{3- 2 k} }{2 \pi}  U\left(k+\frac{3}{2},\frac{5}{2}-k,k u^2\right)
\eea

For instance for $k=1$ it gives
\bea
&& \phi_1= \frac{e^{{\tilde u}^2} \left(2
   {\tilde u}^2+1\right)
   \text{erfc}({\tilde u})}{{\tilde u}}-\frac{2}{\sqrt{\pi }} \\
&& \psi_1= \frac{2
   \left({\tilde u}^2+1\right)}{\sqrt{\pi
   }}-e^{{\tilde u}^2} {\tilde u} \left(2
   {\tilde u}^2+3\right) \text{erfc}({\tilde u})   
\eea 
which, one can check, agrees with \eqref{phigen}, \eqref{psigen}.

These formula then allow to compute the scaled cumulants $\tilde c_k$, e.g.
the first two are given by
\be
\widetilde c_1 =  -\frac{\psi _1}{2 \phi_1 },  \quad  \quad 
\widetilde c_2 = 
\frac{1}{\phi_1^3}  \left(\psi _1 \phi _2-\psi_2 \phi_1\right) 
\ee 
Let us give here their asymptotic behavior, at small $\tilde u$
\bea  \label{equalsmallu}
&& \widetilde c_1 = -\frac{{\tilde u}}{\sqrt{\pi
   }}+\left(\frac{3}{2}-\frac{4}
   {\pi }\right)
   {\tilde u}^2+O\left({\tilde u}^3\right), \\
&& \widetilde c_2 = \frac{1}{\sqrt{\pi
   }}+\left(\frac{12}{\pi
   }-4\right)
   {\tilde u}+O\left({\tilde u}^2\right), 
\eea  
and at large $\tilde u$
\bea \label{equallargeu}
&& \widetilde c_1 = -\frac{3}{4}+\frac{3}{2
   {\tilde u}^2}+O\left(\frac{1}{{\tilde u}^3}
   \right),  \\
&& \widetilde c_2 = \frac{3 \sqrt{\frac{\pi
   }{2}}}{8}-\frac{3
   \sqrt{\frac{\pi }{2}}}{8
   {\tilde u}^2}+O\left(\frac{1}{{\tilde u}^3}
   \right).
\eea 
While $\widetilde c_1 \leq 0$ is decreasing with $\tilde u$, one finds
that $\widetilde c_2$ first decreases from its finite value at $\tilde u=0$
and then increases, exhibiting a minimum value $0.446153$
at $\tilde u= 2.19956$. Note that the limit $\tilde u \to +\infty$
agrees with the result \eqref{cumulantsmaximal}. Note that although $\phi_k$ and $\psi_k$ diverge as $\tilde u \to 0$,
the cumulants $\widetilde c_k$ have a finite limit at the
point $(\tilde u, \tilde v)=(0,0)$. For instance, 
it is easy to check that the $\tilde u=0$ limit of $\widetilde c_2$ 
in \eqref{equalsmallu}, agrees with
the large $L$ limit from 
\eqref{eq:c200}, using \eqref{cklargetime2}.

\bigskip 

Let us give also the asymptotics of the third cumulant $\tilde c_3$
\be \label{c3new} \widetilde c_3 = 
\begin{cases} \frac{4 (\pi -3)}{\pi }-\frac{48
   \left(5-3 \pi +\sqrt{2} \pi
   \right) \tilde u}{\pi
   ^{3/2}}+O\left(\tilde u^2\right) \\
  \left(\frac{80}{27
   \sqrt{3}}-\frac{27}{16}\right
   ) \pi
   +\frac{\left(\frac{160}{9
   \sqrt{3}}-\frac{675}{64}\right) \pi
   }{\tilde u^4}+O\left(\frac{1}{
   \tilde u^5}\right) \end{cases} 
\ee 
One finds that $\widetilde c_3$ is a positive decreasing function of $\tilde u$.
Note that the limit $\tilde u \to +\infty$
agrees with the result \eqref{cumulantsmaximal}. For $\tilde u \to 0$,
one can push the exercise to higher orders and one finds
\be \label{cum45}
\widetilde c_4 = \frac{48
   \left(5+\left(\sqrt{2}-3\right) \pi \right)}{\pi ^{3/2}} + O(\tilde u),  \quad \quad 
   \widetilde c_5 = -\frac{320 \left(21+2 \left(4
   \sqrt{2}-9\right) \pi
   \right)}{\pi ^2} + O(\tilde u).
\ee 
This gives some higher cumulants for the point $(\tilde u,\tilde v)=0$.


\subsection{The case $\tilde v=0$, $\tilde u>0$, or $\tilde u=0$, $\tilde v>0$} For $\tilde v=0$, the 
coefficients are well defined for $\tilde u>0$ and read
\bea  \label{phipsiv0}
&& \phi_k = 4^k I_{k,0,k}(\tilde u^2) = 4^k  \frac{\tilde u^{1-2 k}}{2 \sqrt{\pi}} U\left(\frac{1}{2},\frac{3}{2}-k,k \tilde u^2\right) \\
&& \psi_k = 4^k I_{k,1,k}(\tilde u^2) = 4^k \frac{\tilde u^{3- 2 k} }{4 \sqrt{\pi }} U\left(\frac{3}{2},\frac{5}{2
   }-k,k \tilde u^2\right) 
\eea  
with the same result for $\tilde u=0$ as a function of $\tilde v$.
One finds that the first cumulant $\widetilde c_1$ is negative and a decreasing function of $\tilde u$
\bea 
\widetilde c_1 =  -\frac{E_{\frac{3}{2}}\left({\tilde u}^2\right)}{4
   E_{\frac{1}{2}}\left({\tilde u}^2\right)} = -\frac{{\tilde u}^2 \Gamma
   \left(-\frac{1}{2},{\tilde u}^2\right)
   }{4 \sqrt{\pi }
   \text{erfc}\left({\tilde u}\right)} = \begin{cases} -\frac{1}{4}+\frac{1}{4
   {\tilde u}^2}+O\left(\frac{1}{{\tilde u}^3} \right) \\ 
   -\frac{{\tilde u}}{2\sqrt{\pi
   }}+\frac{(\pi -2) {\tilde u}^2}{2 \pi
   }+O\left({\tilde u}^3\right) \end{cases}
\eea 
where $E_a(z)=\int_1^{+\infty} e^{-z t}/t^a$ is the exponential integral function
and $\Gamma(a,z)$ the incomplete Gamma function. The second cumulant has again a minimum at ${\tilde u}=0.423115$
of value $0.548349$, with asymptotics
\bea 
\widetilde c_2 = \begin{cases} 
\frac{1}{\sqrt{\pi }}-\frac{2
   (\pi -3) {\tilde u}}{\pi
   }+\frac{\left(24-13 \pi +4
   \sqrt{2} \pi \right) {\tilde u}^2}{\pi
   ^{3/2}}+O\left({\tilde u}^3\right)
\\
\frac{\sqrt{\pi}}{2 \sqrt{2}}-\frac{\sqrt{\frac{\pi }{2}}}{4
   {\tilde u}^2}+O\left(\frac{1}{{\tilde u}^3}
   \right)
 \end{cases}
\eea 
Note that the limit $\tilde u\to 0$ coincides with the value found above
in the case $\tilde v=\tilde u \to 0$. More generally, 
pushing the expansion to obtain higher cumulants (up to order $5$ included)
one recovers exactly \eqref{c3new} and \eqref{cum45}, which we
can identify with the result for the point $(\tilde u,\tilde v)=0$.

\bigskip 

In the limit $\tilde u \to + \infty$ (with $\tilde v=0$) one finds from \eqref{phipsiv0}
\be 
\phi_k \simeq \frac{2^{2 k-1} {\tilde u}^{-2
   k}}{\sqrt{\pi } \sqrt{k}},  \quad  \quad \psi_k = \frac{4^{k-1} {\tilde u}^{-2
   k}}{\sqrt{\pi } k^{3/2}}
\ee 
which, from \eqref{parametricrescaled} leads to the following 
parametric representation of the cumulant generating function $E(\widetilde s)$
\be \label{parametricrescaled2}
\widetilde s = \frac{1}{4 \sqrt{\pi} } {\rm Li}_{3/2}(\hat \zeta),  \quad \quad \widetilde E(\widetilde s) = -\frac{1}{16 \sqrt{\pi} } {\rm Li}_{5/2}(\hat \zeta)
\ee 
where $\hat \zeta=4 \tilde \zeta/u^2$ must be eliminated.
This can be compared with the result for the 
ASEP in \cite[Eq. (IV.70-71)]{LazarescuThesis} on the HD-MC transition line.
We find again that it agrees, using \eqref{corresp}, except
that the non-universal shift $\tilde s/2$ in the second line of \eqref{corresp} is now absent.

\section{First cumulant as a functional of Brownian motions} 
\label{sec:appendixfunctionals}
Consider the SHE \eqref{eq:SHE} on some domain $\mathbb X$ (either the interval  $[0,L]$ with periodic or open boundary conditions, or the full-line $\R$). Let 
\be
h_{\varphi}(t):= \log\left( \int_{\mathbb X}\mathrm dx \varphi(x) Z(x,t) \right).
\ee 
As $t$ goes to infinity, we expect that $h_{\varphi}(t)\sim c_1 t$ where $c_1$ does not depend on $\varphi$. Using \eqref{eq:SHE} and Ito equation, 
\be 
\frac{d}{dt} \EE\left[ h_{\varphi}(t) \right] = \EE\left[ \frac{\int_{\mathbb X} dx \varphi(x) \partial_{x}^2 Z(x,t) }{\int_{\mathbb X} dx \varphi(x)  Z(x,t)} \right] - \frac{1}{2 } \EE\left[ \frac{\int_{\mathbb X} dx \varphi(x)^2 Z(x,t)^2 }{\left(\int_{\mathbb X} dx \varphi(x)  Z(x,t)\right)^2} \right]. 
\ee
Assume that $\varphi$ is smooth and satisfies the following boundary conditions:
\begin{itemize}
    \item $\varphi(0)=\varphi(L)$ if $\mathbb X=[0,L]$ with periodic boundary conditions $Z(0,t)=Z(L,t)$ as in Section \ref{sec:periodic};  
    \item $\varphi'(0)= (u-1/2)\varphi(0)$ and $\varphi'(L)= -(v-1/2)\varphi(L)$ if  $\mathbb X=[0,L]$ with  open boundary conditions as in \eqref{eq:SHEboundaryconditions} (for more details about the boundary condition imposed on the test function, we refer to \cite[Section 5]{corwin2018open}); 
\end{itemize}
Integrations by parts yield 
\be 
\frac{d}{dt} \EE\left[ h_{\varphi}(t) \right] = \EE\left[ \frac{\int_{\mathbb X} dx \varphi''(x) Z(x,t) }{\int_{\mathbb X} dx \varphi(x)  Z(x,t)} \right] - \frac{1}{2 } \EE\left[ \frac{\int_{\mathbb X} dx \varphi(x)^2 Z(x,t)^2 }{\left(\int_{\mathbb X} dx \varphi(x)  Z(x,t)\right)^2} \right]. 
\ee 
Dividing numerators and denominator by $Z(0,t)$ and using the fact that $Z(x,t)/Z(0,t)$ converges in distribution to $e^{H(x)}$ at large time, we obtain that as $t$ goes to infinity, 
\be 
c_1 = \lim_{t\to\infty} \frac{d}{dt} \EE\left[ h_{\varphi}(t) \right] = \EE\left[ \frac{\int_{\mathbb X} dx \varphi''(x) e^{H(x)} }{\int_{\mathbb X} dx \varphi(x)  e^{H(x)}} \right] - \frac{1}{2 } \EE\left[ \frac{\int_{\mathbb X} dx \varphi(x)^2 e^{2H(x)} }{\left(\int_{\mathbb X} dx \varphi(x)  e^{H(x)}\right)^2} \right].
\label{eq:functionalforc1}
\ee 
In the case where $\mathbb X=\R$, the limiting stationary measure is $H(x) = B(x)$, a two-sided Brownian motion without drift. In the case of the interval with periodic boundary conditions, the unique stationary measure $H$ is a Brownain bridge, and  with open boundary conditions,  $H$ is described in Section \ref{sec:stationarymeasures}. 
\begin{remark}
    It is remarkable that the formula \eqref{eq:functionalforc1} should be independent of the test function $\varphi$.  This is a strong constraint, and  it would be interesting to understand how much this constraint characterizes the stationary measure. 
\end{remark}
\subsection*{Special cases}
In the case of periodic boundary conditions, we may take $\varphi\equiv 1$, and we find 
\be
c_1^{\rm per}(L) =  - \frac{1}{2 } \EE\left[ \frac{\int_0^L dx  e^{2B(x)} }{\left(\int_0^L dx   e^{B(x)}\right)^2} \right] = -\frac{L}{2}  \EE\left[ \frac{1}{\left(\int_0^L dx  e^{B(x)}\right)^2} \right]
\ee
where $B$ is a Brownian bridge on $[0,L]$ starting from $B(0)=0$. We recover the formula obtained in \cite[Eq. (4.1)]{dunlap2023fluctuation}, which can be matched with \eqref{c1per} (see \cite[Prop. 4.1]{dunlap2023fluctuation}).

In the case of open boundary conditions with $u=v=1/2$, we may also take $\varphi\equiv 1$, and we find 
\be 
c_1(1/2,1/2,L) =  - \frac{1}{2 } \EE\left[ \frac{\int_0^L dx  e^{2H(x)} }{\left(\int_0^L dx   e^{H(x)}\right)^2} \right]
\ee 
where $H$ is the stationary measure described in Section \ref{sec:stationarymeasures}. When 
$u+v=1$, one can choose $\varphi(x) =e^{x(u-1/2)}$ and using \eqref{eq:stationarymeasure2}, we find 
\be 
c_1(u,1-u,L) = \frac{1}{\mathcal Z_{u,1-u,L}(L)}\mathbb E\left[ \frac{e^{-v(B_1(L)-B_2(L))}\int_0^L e^{2(B_1(x) + (u-1/2)x)}dx}{\left( \int_0^L e^{B_1(x)+(u-1/2)x} dx \right)^2 \int_0^L e^{-(B_1(x)-B_2(x))} dx }\right]
\ee 
where $B_1$ and $B_2$ are independent standard Brownian motions and 
\be 
\mathcal Z_{u,1-u,L}(L) = \frac{1}{2} \int_{\I\R} \frac{dw}{2\I\pi} \Psi(w) = \mathbb E\left[ \frac{e^{-v(B_1(L)-B_2(L))}}{\int_0^L e^{-(B_1(x)-B_2(x))} dx }\right].
\ee

\end{document}